\newtheorem{lemma}{Lemma}
\newtheorem{theorem}{Theorem}
\newtheorem{definition}{Definition}
\newcommand{\myskip}{\vskip 0.4cm\noindent}
\newcommand{\V}{\mathcal{V}}
\newcommand{\ovr}{\overline{r}}
\newcommand{\s}[1]{\hspace*{#1pt}}
\newenvironment{example}{\noindent\textbf{Example.}\\}{}
\newenvironment{fact}{\noindent\textbf{Fact.}\\}{}
\newenvironment{obs}{\noindent\textbf{Observation.}\\}{}
\journal{Discrete Applied Mathematics}
\newcommand\comment[1]{}
\begin{document}
\begin{frontmatter}

\title{
Compressed Pattern-Matching with Ranked Variables in Zimin Words
 }
 \author[umk]{Rados{\l}aw~G{\l}owinski\corref{cor1}}
\ead{glowir@mat.umk.pl}
 \author[uw,umk]{Wojciech Rytter\corref{cor1}\fnref{grant2}}
\ead{rytter@mimuw.edu.pl}

\cortext[cor1]{Corresponding author}
\fntext[grant2]{Supported by grant no. N206 566740 of the National Science Centre.}
\address[umk]{Faculty of Mathematics and Computer Science,\\
Nicolaus Copernicus University,\\
Chopina 12/18, 87-100 Toru{\'n}, Poland\\
(+48) (56) 611-3410, fax (+48) (56) 611-2987}

\address[uw]{Department of Mathematics, Computer Science and Mechanics, \\
    University of Warsaw,\\
    Banacha 2, 02-097 Warsaw, Poland\\
    fax: (+48)(22) 55-44-200\\
    glowir@mat.umk.pl, rytter@mimuw.edu.pl}

 \begin{abstract}
\vskip 0.2cm \noindent Zimin words are very special finite words which are
 closely related to the pattern-avoidability problem.
This problem  consists in testing
if  an instance of a given pattern with variables occurs in almost all words over any finite alphabet.
The problem is not well understood, no polynomial time algorithm is known and its
NP-hardness is also not known. The pattern-avoidability problem is
equivalent to searching for a  pattern (with variables) in a
Zimin word. The main difficulty is potentially exponential size of
 Zimin words. We use  special properties of Zimin words, especially that
 they are highly compressible,  to
design efficient algorithms for special version of the pattern-matching, called
here {\em ranked matching}.  It gives a new
interpretation of Zimin algorithm in compressed setting. We
discuss the structure of  rankings of variables and compressed
representations of values of variables. Moreover, for a ranked matching we present efficient algorithms to find
the shortest instance and the number of valuations of instances of the pattern. 
\end{abstract}

\end{frontmatter}

\section{Introduction}
The research on pattern avoidability started in late 70's in the
papers by Bean, Ehrenfeucht and McNaulty \cite{bean}, and
independently by Zimin  \cite{zimin}. In the avoidability problem
two disjoint finite alphabets, $A=\{a,b,c,...\}$ and
$V=\{x_1,x_2,x_3,...\}$ are given, the elements of $A$ are letters
(constants)
 and the elements of $V$ are variables. We denote the empty word by $\epsilon$. A pattern $\pi$  is a sequences of variables.
The language of a pattern with respect to an alphabet $A$ consists
of words  $h(\pi)$, where $h$ is any non-erasing morphism from
$V^*$ to $A^+$ . We say that word $w$ encounters pattern $\pi$ (or
pattern occurs in this word) when there exists a morphism $h$,
such that $h(\pi)$ is a subword of $w$. In the other case $w$
avoids $\pi$.

\myskip  The pattern $\pi$ is unavoidable on $A$ if every long
enough word over $A$ encounters $\pi$, otherwise it is avoidable
on $A$. If $\pi$ is unavoidable on every finite $A$ then $\pi$ is
said to be unavoidable. \myskip
\begin{example}
The pattern $\alpha\alpha$  is avoidable on a three letters alphabet,
 see \cite{thue}, however it is unavoidable on any smaller alphabet. However $\alpha\beta\alpha$ is unavoidable on any alphabet.
\end{example}
\myskip \noindent
\noindent The crucial role in avoidability problems play the words
introduced by  Zimin \cite{zimin}, called Zimin words, and denoted
here by $Z_k$. \myskip
\begin{definition} {\rm \textbf{(of Zimin words)}} Let $$Z_1=1,\; Z_k=Z_{k-1}\:k\:Z_{k-1}$$
\end{definition}
\myskip
\begin{example}
$Z_1=1, \hspace{5pt} Z_2=121, \hspace{5pt} Z_3=1213121, \hspace{5pt} Z_4=121312141213121$
\end{example}
\myskip Observe  that these words are exponentially long, however
they  have a very simple structure implying many useful
properties. Define the Zimin morphism  $$\mu(1)=121,\;
\ \textrm{and}\ (\forall\;i>1)\ \mu(i)= i+1\ . $$
\begin{fact}\mbox{ \ }
\begin{itemize}
\item
The morphism  $\mu$ generates next Zimin word by mapping each letter according to $\mu$.
In other words:\ $Z_{k}= \mu (Z_{k-1})$.
\item Each Zimin word, considered as a pattern, is unavoidable. Moreover it is a longest unavoidable pattern over k-th letter alphabet. There exists only one (up to letter permutation) unavoidable pattern of length $2^{k}-1$ over
a  k-th letter alphabet and it is $Z_k$.
    \end{itemize}
\end{fact}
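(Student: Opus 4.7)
For the first bullet, I would proceed by induction on $k$. The base case is $\mu(Z_1) = \mu(1) = 121 = Z_2$. For the inductive step, assuming $Z_k = \mu(Z_{k-1})$, distributivity of the morphism together with $\mu(k) = k{+}1$ (valid since $k \geq 2$) gives
$$\mu(Z_k) \;=\; \mu(Z_{k-1})\,\mu(k)\,\mu(Z_{k-1}) \;=\; Z_k \cdot (k{+}1) \cdot Z_k \;=\; Z_{k+1}.$$

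For the second bullet, I would appeal to Zimin's structural theorem: a pattern $\pi$ over $k$ distinct variables is unavoidable if and only if $\pi$ occurs as a pattern inside $Z_k$, meaning there is a non-erasing morphism $h$ such that $h(\pi)$ is a factor of $Z_k$. Establishing this theorem is the technical heart of the argument. The unavoidability direction (in particular, unavoidability of $Z_k$ itself) I would prove by induction on $k$: given a very long word $w$ over an alphabet $A$, the inductive hypothesis yields many disjoint factors of $w$ that are instances of $Z_{k-1}$ under (possibly different) valuations. After bounding the valuations' lengths by a function of $|A|$, a pigeonhole argument produces two nearby instances of $Z_{k-1}$ with identical valuations; these combine, together with the letter between them, into an occurrence of $Z_k$.

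Once the structural theorem is granted, the remaining claims are short. If $\pi$ is unavoidable with $k$ variables then some $h(\pi)$ is a factor of $Z_k$, so $|\pi| \leq |h(\pi)| \leq |Z_k| = 2^k - 1$ because $h$ is non-erasing. In the extremal case $|\pi| = 2^k - 1 = |Z_k|$, the morphism $h$ must send every variable to a single letter and $h(\pi)$ must be all of $Z_k$; since $Z_k$ uses exactly $k$ distinct letters, $h$ is forced to be a bijection between the variables of $\pi$ and $\{1,\dots,k\}$, so $\pi$ equals $Z_k$ up to variable renaming.

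The principal obstacle is the inductive unavoidability proof. The pigeonhole must be set up over bounded-length valuations, and the resulting word-length bound $N_k(|A|)$ grows very fast (tower-like) in $k$. The length bound and uniqueness claims become simple corollaries once Zimin's structural theorem is available.
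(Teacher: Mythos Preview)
The paper does not give a proof of this Fact; it is stated as folklore and implicitly deferred to Zimin~\cite{zimin} and Bean--Ehrenfeucht--McNulty~\cite{bean}. So there is no in-paper argument to compare against.

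Your argument is correct and is essentially the standard one from that literature. The induction for $Z_k=\mu(Z_{k-1})$ is clean. For the second bullet you rightly identify that everything hinges on Zimin's structural theorem (stated in the paper as Lemma~\ref{lemma5}, also without proof), and your sketch of its unavoidability direction via iterated pigeonhole on bounded-length valuations is the classical route. The extremal consequences are derived correctly: non-erasing $h$ forces $|\pi|\le|Z_k|$, and equality forces $h$ to be letter-to-letter with $h(\pi)=Z_k$; since $Z_k$ uses all $k$ letters and $\pi$ has $k$ variables, $h$ is a bijection and $\pi$ is $Z_k$ up to renaming.

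One small remark: in your inductive step you silently use $Z_k=Z_{k-1}\,k\,Z_{k-1}$, which already requires $k\ge 2$, so the parenthetical ``valid since $k\ge 2$'' should really cover both the use of the recursion and the identity $\mu(k)=k{+}1$. This is cosmetic, not a gap.
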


\noindent The main property of Zimin words is that the avoidability problem is
reducible to pattern-matching in Zimin words, see  \cite{zimin}, \cite{heitsh}.
\myskip
\begin{lemma}\label{lemma5}
$\pi$ is an unavoidable pattern if and only if $\pi$ occurs in $Z_k$, where $k$ is the number of distinct symbols occurring in $\pi$.
\end{lemma}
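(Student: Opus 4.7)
The claim is an equivalence, and the two directions have quite different flavors. I would handle $(\Leftarrow)$ first, since it is a direct consequence of the Fact just stated. Assume $h(\pi)$ is a factor of $Z_k$ for some non-erasing morphism $h:V(\pi)^+ \to \{1,\ldots,k\}^+$. Fix any finite alphabet $A$. Since $Z_k$ is itself an unavoidable pattern, every sufficiently long word $w \in A^+$ contains $g(Z_k)$ as a factor for some non-erasing morphism $g:\{1,\ldots,k\}^+ \to A^+$. Then $g(h(\pi)) = (g\circ h)(\pi)$ is a factor of $g(Z_k)$ and hence of $w$, so $w$ encounters $\pi$ via $g \circ h$. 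This shows $\pi$ is unavoidable.

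For the harder direction $(\Rightarrow)$, I would proceed by induction on $k = |V(\pi)|$, with the trivial base case $\pi=\epsilon$ (or $\pi = x$, which embeds in $Z_1 = 1$). The heart of the inductive step is a Zimin-style reduction lemma: from an unavoidable pattern $\pi$ on $k\ge 1$ variables, isolate a ``free'' variable $x\in V(\pi)$ together with a decomposition $\pi = u_0\,x\,u_1\,x\cdots x\,u_m$ (with $x$ absent from the $u_i$'s) such that a suitably contracted pattern $\pi'$ over $V(\pi)\setminus\{x\}$ remains unavoidable on $k-1$ variables. By the inductive hypothesis $\pi'$ embeds into $Z_{k-1}$ via some morphism $h'$. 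I would then lift $h'$ to a morphism $h$ for $\pi$ by setting $h=h'$ on variables $y\neq x$ and defining $h(x)$ so that the $m$ copies of $h(x)$ in $h(\pi)$ align with the recursive seams of $Z_k = Z_{k-1}\,k\,Z_{k-1}$; the unique occurrence of the letter $k$ in $Z_k$ is used to glue the two ``halves'' of the $h'$-image, exploiting that $Z_{k-1}$ appears twice in $Z_k$.

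The main obstacle is pinning down the right statement of the reduction lemma. Naive attempts fail on small examples: ``remove a variable that appears only once'' collapses $xyx$ to the avoidable $xx$, while ``remove a singleton and merge adjacent duplicates'' mishandles patterns such as $xyzxy$ whose unavoidability reflects a deeper nested structure. The correct notion of ``free'' variable has to be compatible with the recursion $Z_k = \mu(Z_{k-1})$ induced by the Zimin morphism $\mu$; its proof is a pigeonhole-style argument showing that the absence of such $x$ would allow one to construct arbitrarily long $\pi$-avoiding words over a finite alphabet, contradicting unavoidability. Once the reduction lemma is in place, the lift of $h'$ to $h$ is essentially bookkeeping dictated by the identity $Z_k = Z_{k-1}\,k\,Z_{k-1}$, and the induction closes.
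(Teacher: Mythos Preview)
The paper does not actually prove this lemma: it is stated with a citation to Zimin's original article and to Heitsch, and no argument is given in the text. So there is no in-paper proof to compare against; the relevant benchmark is the classical Zimin argument the citations point to, parts of which resurface later in the paper (free sets, $\sigma$-deletions, Lemma~\ref{keylemma}, Theorem~\ref{theorem18}).

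Your $(\Leftarrow)$ direction is correct and is exactly the standard argument.

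Your $(\Rightarrow)$ sketch follows Zimin's outline, but two concrete points would not go through as written. First, you assume one can always peel off a \emph{single} free variable $x$ whose deletion leaves an unavoidable pattern. The paper itself warns (right after introducing $\sigma$-deletions) that ``it is insufficient to remove only singleton free sets'' and gives an explicit unavoidable pattern witnessing this; the reduction step in Zimin's proof requires a free \emph{set} $F$, and the induction then drops by $|F|$ variables, embedding $\sigma_F(\pi)$ into $Z_{k-|F|}$ and then into $Z_k$ via $Z_{k-|F|+1}\le Z_k$.

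Second, your lift is inverted. Keeping $h=h'$ on the non-free variables and having $h(x)$ ``contain the letter $k$'' cannot work once $x$ occurs more than once in $\pi$: the decomposition $Z_k=Z_{k-1}\,k\,Z_{k-1}$ supplies only one copy of $k$, so you cannot separate three or more blocks $h'(u_i)$ this way. Zimin's lift goes in the opposite direction: one applies the morphism $\mu$ to the embedding $h'(\sigma_F(\pi))\le Z_{k-1}$, landing in $Z_k$, and then assigns every variable of the free set the value $1$ (the \emph{smallest} letter), adjusting a leading/trailing $1$ on the $\mu$-images of the other variables. The free-set axioms are exactly what make these boundary adjustments consistent; this is the mechanism behind Theorem~\ref{theorem18} and the Uncompressed-Embedding algorithm later in the paper.
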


\section{Compact representation of pattern instances}

\noindent
One of the basic proprties of Zimin words and their factors is related to the concept
of {\em interleaving} introduced below.

\begin{definition}  A sequence $u$ is $j$-interleaved iff for each two adjacent elements of  $u$ exactly oneequals $j$.
\end{definition}
\myskip
The Zimin word $Z_k$ can be alternatively defined as follows:
\begin{description}
\item{\bf (A)}\ $Z_k$ starts with 1 and ends with 1;
\item{\bf (B)}\  $|Z_k|=2^k-1$;
\item{\bf (C)}\  For each $1\le j\leq k$ after removing all elements smaller than $j$ the
obtained sequence is $j$-interleaved
\end{description}
\begin{lemma} \label{obs6}
If  $u\in \{1,2,\ldots,k\}^+$, then $u$ is a
factor of $Z_k$ iff it satisfies the condition (C).
\end{lemma}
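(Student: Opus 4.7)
I would prove both implications.

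For the forward direction (factor $\Rightarrow$ (C)), observe that $Z_k$ itself satisfies (C) by the equivalent characterization (A)--(C) recalled above. For each $j$, the operation ``remove all symbols smaller than $j$'' commutes with taking contiguous factors, and a contiguous factor of a $j$-interleaved sequence is itself $j$-interleaved (its adjacent pairs form a subset of those of the ambient sequence). Hence every factor of $Z_k$ inherits (C).

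For the converse, I would use induction on the maximum letter $m$ of $u$, strengthened to three simultaneous assertions about a non-empty word $v$ with $\max v = m$ satisfying (C): \emph{(i)}~$v$ is a factor of $Z_m$; \emph{(ii)}~if for every $j \le m$ the rightmost element of $v$ that is $\ge j$ equals $j$ (the ``right-skyline'' condition), then $v$ is a suffix of $Z_m$; \emph{(iii)}~if $v$ additionally satisfies the analogous left-skyline condition and contains a letter at every level $1,\dots,m$, then $v = Z_m$ exactly. The symmetric left-version of (ii) follows from the palindromicity of $Z_m$. Since $Z_m$ is both a prefix and a suffix of $Z_k$ whenever $m \le k$, assertion (i) immediately yields the lemma.

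The base case $m=1$ is immediate. In the inductive step, condition (C) at $j=m$ forces exactly one occurrence of $m$, so $v = v_1\,m\,v_2$ with $v_1,v_2 \in \{1,\dots,m-1\}^*$. For (i), I apply (ii) inductively to $v_1$ and its left analogue to $v_2$ (the letter $m$ adjacent on the right of $v_1$ and on the left of $v_2$ induces the respective skyline conditions), placing $v$ inside $Z_{m-1}\,m\,Z_{m-1} = Z_m$. For (ii), the $m$ to the right of $v_1$ transmits the right-skyline to $v_1$, so $v_1$ is a suffix of $Z_{m-1}$ by induction; the same $m$ on the left of $v_2$, together with the inherited right-skyline of $v$, forces $v_2$ to satisfy both-sided skyline conditions and contain every level $1,\dots,m-1$, so by the inductive (iii), $v_2 = Z_{m-1}$ exactly. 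Hence $v$ is a suffix of $Z_{m-1}\,m\,Z_{m-1} = Z_m$. Assertion (iii) follows by applying the same pinning argument symmetrically to both halves of $v$.

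The main obstacle is the ``$v_2 = Z_{m-1}$ exactly'' step in (ii): one must verify that $v_2$ contains an element at every level $j < m$, so that the inductive (iii) can be applied. This follows level-by-level from the right-skyline of $v$: if $v_2$ contained no element $\ge j$, then the rightmost $\ge j$ element of $v$ would be the unique $m$, contradicting the skyline requirement (since $j \ne m$). Once this propagation is established, the three-way simultaneous induction closes cleanly.
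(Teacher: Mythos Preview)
Your argument is correct, but it takes a genuinely different route from the paper's. The paper argues only the nontrivial direction (C) $\Rightarrow$ factor, by induction on $k$: it deletes all occurrences of the letter~$1$ from $u$ and shifts every remaining letter down by one, obtaining a word $u''$ over $\{1,\dots,k-1\}$ that again satisfies (C); by the inductive hypothesis $u''$ is a factor of $Z_{k-1}$, and applying the Zimin morphism $\mu$ (then padding with a $1$ at each end if necessary) produces a factor of $Z_k=\mu(Z_{k-1})$ that contains $u$. In other words, the paper works \emph{bottom-up}, stripping the smallest letter and exploiting $\mu$. You instead work \emph{top-down}, inducting on the maximal letter $m$, splitting at the unique occurrence of $m$, and carrying a strengthened hypothesis with your skyline assertions (ii) and (iii). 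The paper's proof is considerably shorter and reuses the morphism $\mu$ that is already a central tool in the paper; your approach is more self-contained (it never invokes $\mu$) and yields, as a by-product, an intrinsic characterisation of the prefixes and suffixes of $Z_m$ via the skyline conditions, which is a nice extra though not used elsewhere in the paper.
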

\begin{proof}
We induction on $k$. For $k=1$ it is obvious.
Assume it is true for $j<k$ and $k>1$.

Remove all letters 1 from $u$, we obtain $u'$, then change each letter $i$ to $i-1$,
in this way we get a word $u''$, which also satisfies the condition (C).
By inductive assumption $u''$ is a factor of $Z_{k-1}$. We add letter 1 at the beginning and at the end of $\mu(u'')$ unless it is already there and denote obtained sequence as $v$. However $u$ is a factor
of $v$ and $\mu(Z_{k-1})=Z_k$.
Consequently $u$ is a factor of $Z_k$ as
a factor of $v$. This completes the proof.
\end{proof}

\noindent
This gives a simple linear time algorithm to check if an explicitly given sequence is a factor of $Z_k$. However we are dealing with patterns,
 and instances of the pattern can be exponential with respect to the length of the pattern and the number of distinct variables. The instance is given by values of each variable which are factors of $Z_k$.
  Hence we introduce compact representation of factors of Zimin words.
\myskip We partition $u$ into $w_1\:m\:w_2$, where m is a highest
number in $w$ (in every subword of Zimin word the highest number
occurs exactly once). Then for each element $i$ of $w_1$,
respectively $w_2$, we remove $i$ if there are larger elements to
the left and to the right of this element. In other words if there
is a factor $s\:\alpha\:i\:\beta\:t$, with $i<s,\:i<t$, we remove
the element $i$. Denote by $compress(u)$ the result of removing
all redundant $i$ in $u$.
\myskip
\begin{obs} \label{obs7}
$compress(u)$ uniquely encodes a factor of a Zimin word.
\end{obs}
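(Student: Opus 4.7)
My plan is to pin down the shape of $compress(u)$, give an explicit inversion formula, and justify it by induction using the recursive structure of Zimin words. Write $u = w_1\,m\,w_2$, where $m$ is the unique maximum letter, and extend the notation by $Z_0 = \epsilon$. The first observation is that the elements of $w_1$ kept by the compression are precisely its \emph{weak prefix maxima}: every element of $w_1$ has $m$ larger on its right, so the removal criterion reduces to having some strictly larger element on the left; equivalently, an element is kept iff it is $\ge$ all of its predecessors in $w_1$. Symmetrically, $w_2$ contributes its weak suffix maxima. These kept sequences are strictly monotone: if consecutive kept values $c_i = c_{i+1}$ in $w_1$ were equal, condition (C) of Lemma \ref{obs6} applied to $w_1$ with $j = c_i$ would force an element $> c_i$ between the two $c_i$'s (so the $\ge c_i$ subsequence is $c_i$-interleaved), whereas between consecutive weak prefix maxima every element is strictly smaller than the earlier maximum, a contradiction. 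Hence $compress(u) = c_1 c_2 \ldots c_p\, m\, d_1 d_2 \ldots d_q$ with $c_1 < \ldots < c_p < m > d_1 > \ldots > d_q$, and the peak together with the partition $u = w_1\,m\,w_2$ is read off unambiguously.

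The main step is to recover $w_1$ (and symmetrically $w_2$) by induction on $m$, proving
$$
w_1 = c_1\,Z_{c_1-1}\,c_2\,Z_{c_2-1}\,\cdots\,c_p\,Z_{c_p-1},\qquad
w_2 = Z_{d_1-1}\,d_1\,Z_{d_2-1}\,d_2\,\cdots\,Z_{d_q-1}\,d_q.
$$
Since $m$ occurs only at the centre of $Z_m$, $u$ straddles that centre and $w_1$ is a suffix of $Z_{m-1}$ ending at the last position of $Z_{m-1}$. Using $Z_{m-1} = Z_{m-2}\,(m-1)\,Z_{m-2}$, I descend: if $c_p < m-1$, then $w_1$ contains no $(m-1)$ and must lie in the right copy of $Z_{m-2}$, reducing the problem with $m$ replaced by $m-1$; if $c_p = m-1$, then $w_1$ contains the unique central $(m-1)$, forcing $w_1 = w_1'\,(m-1)\,Z_{m-2}$ with tail $Z_{m-2} = Z_{c_p-1}$ and $w_1'$ a suffix of the left copy of $Z_{m-2}$ whose compressed form is $c_1 < \ldots < c_{p-1}$. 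Applying the inductive hypothesis to $w_1'$ then yields the displayed formula.

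The delicate point I expect to be the main obstacle is justifying this forced descent at every level, that is, ruling out any alternative position for the kept letter $c_p$ inside the ambient $Z_{m-1}$. The key is that no letter greater than $c_p$ appears in $w_1$, which pushes the starting position of $w_1$ strictly past the last occurrence in $Z_{m-1}$ of every letter $> c_p$; a direct computation with the positions $2^{m-1} - 2^{j-1}$ of last occurrences then shows the unique kept $c_p$ must coincide with the last occurrence of $c_p$ in $Z_{m-1}$, after which the tail of $Z_{m-1}$ is precisely $Z_{c_p-1}$. With that alignment fixed, $w_1'$ satisfies the same hypotheses with a strictly smaller peak and the induction closes.
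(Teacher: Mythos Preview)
Your argument is correct, and in fact the paper offers no proof of this observation at all: it is stated bare, and the inverse map only appears later, informally, under the name ``extended compressed representation'' (e.g.\ $Ext(1\,4\,3\,1)=Z_1\,4\,Z_2\,3\,Z_1$). Your explicit reconstruction
\[
w_1 \;=\; c_1\,Z_{c_1-1}\,c_2\,Z_{c_2-1}\cdots c_p\,Z_{c_p-1},\qquad
w_2 \;=\; Z_{d_1-1}\,d_1\cdots Z_{d_q-1}\,d_q
\]
is precisely that map, so you have supplied the justification the paper omits, namely that the extended representation inverts $compress$.

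Two minor remarks. First, you implicitly use that a factor $u$ of a Zimin word whose maximum letter is $m$ is already a factor of $Z_m$; this is true (in any $Z_M$ with $M\ge m$ the maximal blocks over $\{1,\dots,m\}$ are copies of $Z_m$) but deserves one sentence. Second, the closing paragraph about a ``delicate point'' is over-cautious: your two-case descent already determines the position of $c_p$ with no ambiguity. If $c_p<m-1$ then $w_1$ contains no $m-1$ and therefore sits inside the right copy of $Z_{m-2}$; if $c_p=m-1$ then the sole $m-1$ of $Z_{m-1}$ must be the $c_p$ you kept, and everything to its right in $w_1$ is forced to be the full $Z_{m-2}=Z_{c_p-1}$. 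A straightforward induction on $m$ (case one lowers $m$ by one; case two lowers $m$ by one and peels off the block $c_p\,Z_{c_p-1}$; base case $Z_0=\epsilon$) finishes the argument without the positional arithmetic involving $2^{m-1}-2^{j-1}$.
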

\myskip
\begin{example}
$$compress(2141213121512131)\;=\;24531$$
$$ compress(Z_4)\;=\;compress(121312141213121)\;=\;1234321$$
\begin{figure}[ht]
\begin{center}
\begin{pspicture}(0,0)(8,1.5)
\definecolor{color34b}{rgb}{0.8509803921568627,0.8509803921568627,0.8509803921568627}
\definecolor{color34}{rgb}{0.996078431372549,0.996078431372549,0.996078431372549}
\definecolor{color32e}{rgb}{0.7098039215686275,0.7098039215686275,0.7098039215686275}
\definecolor{color32}{rgb}{0.00392156862745098,0.00392156862745098,0.00392156862745098}
\usefont{T1}{ptm}{m}{n}
\rput(0.4,2){{\Large $\alpha$}}\rput(2.2,2){{\Large $\beta$}}\rput(4.7,2){{\Large $\gamma$}}\rput(7.25,2){{\Large $\beta$}}
\rput(4,1.3){1 2 \psframebox[linewidth=0.04,framearc=0.5,linecolor=color34,fillstyle=solid,fillcolor=color34b]{\psframebox[linewidth=0.004,linecolor=color32,framearc=0.5,shadow=true,shadowcolor=color32e,fillstyle=solid,fillcolor=color34]{1 3 1 2} \psframebox[linewidth=0.004,linecolor=color32,framearc=0.5,shadow=true,shadowcolor=color32e,fillstyle=solid,fillcolor=color34]{1 4 1 2 1 3 1} \psframebox[linewidth=0.004,linecolor=color32,framearc=0.5,shadow=true,shadowcolor=color32e,fillstyle=solid,fillcolor=color34]{2 1 5 1 2 1 3 1 2} \psframebox[linewidth=0.004,linecolor=color32,framearc=0.5,shadow=true,shadowcolor=color32e,fillstyle=solid,fillcolor=color34]{1 4 1 2 1 3 1}} 2 1}
\rput(0.4,0.6){1 3 2 } \rput(2.2,0.6){1 4 3 1 } \rput(4.8,0.6){2 5 3 2 } \rput(7.3,0.6){1 4 3 1 }
\end{pspicture}

\caption{Example of a compact representation. In the first line there is a pattern, in the second uncompressed valuation of variables and in the third compressed valuation.}
\label{fig::ex1}
\end{center}
\end{figure}
\end{example}

\begin{fact}\label{fact3}
For any $u \in \{1,2,\ldots,k\}^+$ the first and the last elements of $u$ and $compress(u)$ are respectively equal.
\end{fact}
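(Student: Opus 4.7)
The plan is to argue directly from the definition of $compress$: an element $i$ of $u$ is discarded only when a factor $s\,\alpha\,i\,\beta\,t$ of $u$ exists with $s,t > i$, i.e.\ only when there are strictly larger symbols on \emph{both} sides of $i$ in $u$. The extremal positions of $u$ obviously cannot meet this two-sided requirement, and the maximum symbol $m$ is not subject to removal at all, so in every case the endpoints survive compression.

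For the first element $u[1]$, I would split into two cases. If $u[1]=m$, the prefix $w_1$ in the factorisation $u=w_1\,m\,w_2$ is empty, and the output of $compress$ begins with $m$, which equals $u[1]$. Otherwise $u[1]$ sits at the leftmost position of $w_1$; since nothing precedes it in $u$, there is no symbol $s$ to its left at all, much less one with $s>u[1]$, so the removal condition fails and $u[1]$ must appear as the first symbol of $compress(u)$.

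The argument for the last element $u[|u|]$ is symmetric: either $u[|u|]=m$ (so $w_2$ is empty and $compress(u)$ ends with $m$), or $u[|u|]$ is the rightmost symbol of $w_2$, and the absence of any symbol to its right in $u$ makes the removal condition unsatisfiable.

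There is essentially no obstacle here; the only point that needs care is that the removal rule is stated with respect to the entire word $u$, not relative to $w_1$ or $w_2$ individually. This distinction can only enlarge the pool of candidate witnesses $s,t$, but since the first and last positions of $u$ have no neighbour on the outward side in the first place, the rule still cannot fire on them. Hence Fact~\ref{fact3} follows from a single observation about the asymmetry between the two-sided removal criterion and one-sided extremal positions.
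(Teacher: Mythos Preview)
Your proof is correct. The paper states Fact~\ref{fact3} without any accompanying proof, treating it as an immediate observation; your direct argument---that the two-sided removal criterion $s\,\alpha\,i\,\beta\,t$ with $s,t>i$ cannot fire at an extremal position because one of the required flanking witnesses is absent---is precisely the intended justification and there is nothing in the paper to compare it against.
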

 Notice that compressed representation of any subword of $Z_k$ has
at most $2*k-1$ letters and the representation of all variables
requires $O(k^2)$ memory (under the assumption that only $O(1)$ space is necessary for representing each number). \myskip For a valuation (morphism) $h$
of the variables by its ranking function $R_{h}$ we mean the
function which assigns to each variable $x_i$ its rank:\ $R_h(x_i)$ denotes the maximal letter in $h(x_i)$.

\myskip
For a pattern $\pi$ and a given valuation of variable by $\pi_{(i)}$
we mean the pattern with variables of ranks smaller than $i$
removed from $\pi$.

\noindent By $\V_i(\pi)$ we define the set of variables from
$\pi$ with rank $i$.

\noindent For two strings $u,w$ we write $u\le w$ iff $u$ is a
subword of $w$. By $\pi\rightarrow_h Z_k$ we mean that $h(\pi)
\leq Z_k$ and by $\pi \rightarrow Z_k$ we mean that for some
morphism $h, h(\pi) \leq Z_k$.
%
\myskip
Now we present the algorithm to determine if the concatenation of compact representations $z_1, z_2, \dots , z_k$ is a Zimin subword. Let $m$ be the maximal number that occurs in any representation. For each $i$ we know that $z_i$ encodes some Zimin subword therefore we have to check connections between $z_i$ and $z_{i+1}$ for $i \in \{1,\dots,k-1\}$.

If there is exactly one occurrence of '1' between each two consecutive representations ('1' occurs at the end of the first representation or at the beginning of the second) then we remove all occurrences of '1' from all representations (it can only appear at the beginning or the end of a representation).  Similarly we check for occurrences of '2' and so on up to $m-1$. If during the process we obtain empty representation we remove it from further consideration. If we can't find such occurrences it means that the concatenation is not a Zimin subword. If there is only one representation left (others are empty) we know that this is a Zimin subword.
An idea of algorithm is given below. If we finish with the empty sequence then we accept the input.
\myskip
{\bf Reduction Algorithm}
\\
\hspace*{0.2cm}
{\bf while} the sequence is nonempty {\bf do}
\begin{enumerate}
\item check if there is exactly one occurrence of 'i' between representations 
of each two consecutive variables;
\item remove all occurrences of 'i' from all representations;
\item remove variables with empty representation;
\item increase $i$;
\end{enumerate}

\noindent Denote by $compress(\pi)$ the pattern
in which the  variables replaced by their compressed representations. Of course $|compress(\pi)|\le n\;k$, where $k$ is the maximal rank of a variable.
\begin{theorem}\mbox{ \ }\\
{\bf (a)}\
For $z_1, z_2, \dots , z_k$ with length $l$ we can check if concatenation $z_1 \cdot z_2 \cdot  \ldots \cdot  z_k \leq Z_m$ in time $O(l) $.\\
{\bf (b)} Assume we are given a pattern $\pi$ of size $n$ with $k$ variables
and a compact representation of values of the variables. Then we
can check if the given compressed instance of $\pi$ occurs in
$Z_k$ in time $O(compress(\pi))$.
\\
{\bf (c)}\ 
Assume we are given compressed representation of all variables.
If the pattern is a (unique) valid instance of a factor $y$ of $Z_n$ then we can compute
the compressed representation of $y$ in time $O(compress(\pi))$.
\end{theorem}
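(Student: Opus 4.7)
The plan is to reduce everything to Lemma \ref{obs6}: a sequence over $\{1,\ldots,m\}$ is a factor of $Z_m$ iff, after removing all elements strictly smaller than $j$, the resulting sequence is $j$-interleaved, for every $j$. The Reduction Algorithm processes levels $j=1,2,\ldots$ in increasing order and is, in effect, a direct check of this characterization.

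For part (a), I would first verify correctness. Each $z_i$ is already the compressed image of some Zimin factor, so $j$-interleaving holds \emph{internally} inside every $z_i$ at every level. A failure can therefore only appear at a boundary between consecutive $z_i$ and $z_{i+1}$, and this is exactly what step 1 tests: between the current last letter of $z_i$ and the current first letter of $z_{i+1}$ (both of which are now $\ge j$ since everything smaller has been stripped) there must be exactly one letter equal to $j$, contributed either by the tail of $z_i$ or by the head of $z_{i+1}$, but not both. For the $O(l)$ bound I would implement the concatenation as a doubly linked list in which every letter is additionally chained into a bucket indexed by its value; the ``current first/last letter'' of each $z_i$ is tracked by two pointers that move inward as outer letters are deleted. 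Processing level $j$ then costs time proportional to the number of $j$'s removed plus the number of boundaries still active, and after charging boundary work to the variables that become empty we obtain a total of $O(l)$.

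Part (b) is an immediate application of (a): take $z_i$ to be the compressed representation of $h(x_i)$ in the order in which the variable occurrences appear in $\pi$; the concatenation is then precisely the compressed instance of $\pi$, of total length $O(|compress(\pi)|)$. For part (c), let $L$ be the same concatenation. The crucial observation is that $compress(L)=compress(y)$ where $y=h(\pi)$. Indeed, any letter suppressed when forming the local $compress(h(x_i))$ has strictly larger letters on both sides \emph{inside} $h(x_i)$, so also inside $h(\pi)$, and hence never appears on the ``skyline'' of $h(\pi)$; conversely, a left-to-right or right-to-left maximum of $h(\pi)$ is, a fortiori, such a maximum inside its containing variable and therefore survives in $compress(h(x_i))$. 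Thus $compress(y)$ can be recovered from $L$ by one left-to-right sweep (collecting the left-to-right maxima up to the global maximum) followed by one right-to-left sweep (collecting the right-to-left maxima from the global maximum), running in $O(|L|)=O(|compress(\pi)|)$ time.

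The main obstacle I expect is the linear-time implementation of the Reduction Algorithm in (a): a naive reading of the pseudocode costs $O(l\cdot m)$ because every level appears to require a full scan of all boundaries. The payoff comes from the amortized argument sketched above — each letter is inspected $O(1)$ times, once while ``sitting at a boundary'' and once when it is removed — so the bookkeeping must be arranged so that inactive variables and exhausted levels are skipped for free, which is why the by-value buckets and the ``active variable'' list are the right data structures.
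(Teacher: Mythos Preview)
Your proposal is correct and follows essentially the same route as the paper: part (a) is the Reduction Algorithm with a list implementation, part (b) is a direct corollary, and part (c) is obtained by extracting the left-to-right and right-to-left maxima of the concatenated compressed representations, which is exactly what the paper's ``first larger neighbor'' traversal computes. Two minor remarks: for (a), the clean charging is the one you give in your final paragraph (each successful boundary check at level $j$ deletes one letter, so total boundary work is $\le l$), rather than ``charging to variables that become empty''; and for (c) you actually supply a correctness argument (that $compress(L)=compress(y)$) which the paper omits.
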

\begin{proof} 
The point (a) follows form an easy list implementation of the informal reduction algorithm. 
The point (b) follows directly from point (a).
The proof of point (c) works as follows. 
We rewrite the pattern as a sequence of compressed representation of
variables. For each element we compute the first larger element to the left and to the right. It corresponds to the first larger neighbor problem, known to be solvable in
linear time. The we start with leftmost element and generate the sequence $y'$ of first larger right neighbors.
Then we start with the right end of the whole sequence  and generate the sequence $y''$ of first larger left neighbors. The compressed representation of $y$ is the concatenation $y'\;y''$.
\end{proof}
\noindent We show now an alternative algorithm using extended compressed representations.
Each compressed representation can be treated as a concatenation of letters and smaller
Zimin words, for example the value of $2 5 3 1$ is $2\;Z_1\;5\;Z_2\;3\;Z_1\;2$.
It is called the {\em extended compressed representation}.
\
Generally, if the (previous) compressed representation is: 
$$z\ =\ i_1<i_2<\ldots i_j>i_{j+1}>i_{j+2}>\ldots i_r$$
then we change the boundary letters 1 to $Z_1$, if there are such letters, afterwards 
we change every $i_p$, for $i<j$ to $Z_{i_p-1}\; i_p$ and every $ j_p$ to $Z_{j_p-1}\;j_p$.
\
Denote by $Ext(x)$ the extended compressed representation of the variable $x$.
\begin{figure}[h]
\begin{center}
\includegraphics[width=4.7in]{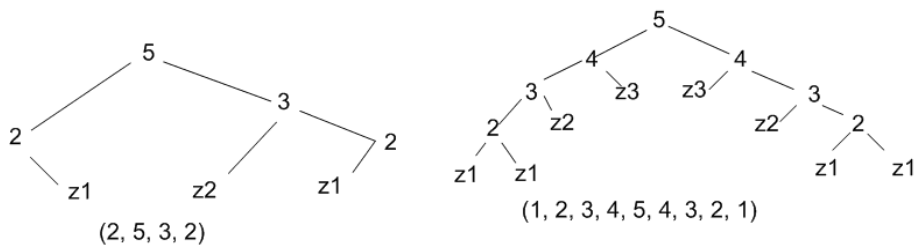}
\caption{Tree illustrations of two compressed representations. The second one corresponds to $ Z_5$.}\label{psc_rys1}
\end{center}
\end{figure}

\begin{example}
 We continue previous example. We have $$Ext(\alpha)=Ext(1 3 2)=Z_1\; 3\;  Z_1\;  2,\ \
 Ext(\beta)=Ext(1 4 3 1)=Z_1\;4\;Z_2\; 3\;  Z_1,$$
  $$Ext(\gamma)=Ext(2 5 3 2)=2\;  Z_1\;   5\;   Z_2\;  3\;   Z_1\;  2$$
 \end{example}

\noindent Let $Ext(\pi)$ be the extended compressed representation of the pattern $\pi$
resulting by substituting extended representations of variables. For example, with values of variables from the previous example, we have:
$$Ext(\alpha\; \beta\; \gamma\; \beta)\ = 
Z_1\; 3\;  Z_1\;  2\; Z_1\;4\;Z_2\; 3\;  Z_1\;2\;  Z_1\;   5\;   Z_2\;  3\;   Z_1\;  2\;  Z_1\;4\;Z_2\; 3\;  Z_1$$
We can scan this sequence from left to right and replace every factor 
$Z_{i-1}\;i\;Z_{i-1}\;i$ by $Z_i$
until no change are allowed. Finally we get the extended compressed representation
of the whole sequence, if it is a sequence corresponding to a factor of $Z_n$.
In our case after the first scan we get 
$$Z_1\; 3\;  Z_2\;4\;Z_2\; 3\;  Z_2\;   5\;   Z_2\;  3\;   Z_2\;4\;Z_2\; 3\;  Z_1$$
After the next scan we get:
$$Z_1\; 3\;  Z_2\;4\;Z_3\;   5\;   Z_3\;4\;Z_2\; 3\;  Z_1$$
which is an extended compressed representation of the factor of a Zimin word
corresponding to (non-extended) representation $(1,3,4,5,4,3,1)$.
Instead of scanning many time the extended pattern we can use a kind of a parsing tree,
which gives us the natural order of reductions. 

\begin{figure}[h]
\begin{center}
\includegraphics[width=4.7in]{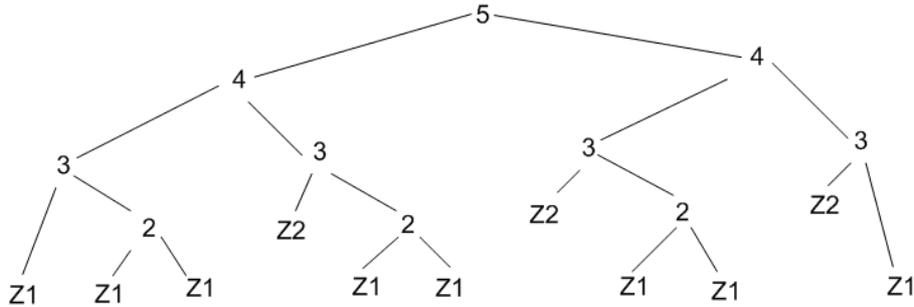}
\caption{The Cartesian tree is a natural parsing tree for a sequence of letters and smaller Zimin words, after the reduction we get extended compressed representation of the whole sequence, if its is a factor of $Z_n$.}
\end{center}
\end{figure}

\begin{figure}[h]
\begin{center}
\includegraphics[width=4.7in]{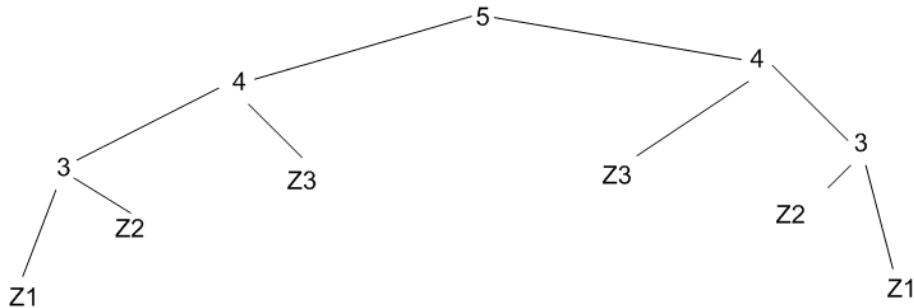}
\caption{After bottom-up reductions of the previous tree we get a representation of a factor of $Z_5$.}
\end{center}
\end{figure}

\noindent A Cartesian tree is a natural parsing
tree for extended compressed sequence.
The Cartesian tree is a useful data structure computable in linear time. For
a sequence of ordered elements we construct the tree as follows.
The root is the largest element. The left son is the largest element to the
left of the root, and its subtree is a Cartesian tree of the elements to the
left of the root. Similarly for the right part of the sequence.
\
In our case we assume that the priority of $Z_i$ is the same as the letter $i$.
The Cartesian tree corresponding to our sequence from the previosu example is shown below.


%
%
%
%
%
%
%

\section{Some properties of Zimin words and free sets}
 The
ranking sequence associated with $\pi$ is the sequence of ranks of
consecutive variables in $\pi$. \myskip
 Assume for a while that
our pattern $\pi$ is a permutation of $n$ variables and we ask for
the set of possible ranking sequences. \myskip The ranking
sequence has many useful properties:
\begin{enumerate}
\item Between every two occurrences of the same number $a$ in ranking sequence there should be a number larger
than $a$.
\item The ranking function is not necessarily  injective (one to one).
\item If $x_1x_2$ and $x_1x_3$ are subwords of a pattern $\pi$, $x_1,x_2,x_3 \in V$ and $rank(x_3)<rank(x_2)<rank(x_1)$ and there exists a morphism $\varphi $ that morphs $p$ into $Z_k$ then $\varphi(x_3)$ is a proper prefix of $\varphi(x_2)$.
\end{enumerate}
\myskip Let $\ovr(\pi,h)$ be the set of ranks of variables in $\pi$
for the valuation $h$. For example, for a pattern $\pi=\alpha \beta \alpha \gamma \beta \alpha$ and a valuation $h(\alpha)=1, h(\beta)=2, h(\gamma)=31$ ranking sequence is $(1, 2, 1, 3, 2, 1)$ and $\ovr(\pi,h)=\{1,2,3\}$.

 \myskip The following three facts are
consequences of the proof of Zimin theorem (see \cite{zimin} for
details). \myskip
\begin{lemma}\label{three}\mbox{ \ }
\begin{enumerate}
\item
If pattern $\pi$ is unavoidable then there exists a morphism $h$
such that $h(\pi)$ occurs in $Z_k$ and $\min \ovr(\pi,h)=1$.
\item
If  $\pi \rightarrow_h Z_k$ and $\min \ovr(\pi,h) = j+1 >1$ then there exists morphism $g$ such that $\pi \rightarrow_g Z_k$ and $\ovr(\pi,g)=\{r-j : r \in \ovr(\pi,h)\}$.
\item
If $\pi\rightarrow Z_k$ then there exists morphism $h$ such that the set of ranks is an
interval:\ $\ovr(\pi,h) = \{1, \ldots, m \}$, for some $1 \leq m
\leq k$.
\end{enumerate}
\end{lemma}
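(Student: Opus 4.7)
The three assertions are interrelated, with (2) serving as the technical workhorse; my plan is to establish (2) by an explicit construction, deduce (1) immediately, and then handle (3) by invoking the reduction that underlies Zimin's unavoidability proof.

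For (2), I would define $g(x_i)$ to be the subsequence of $h(x_i)$ obtained by deleting every letter $\le j$ and relabelling every surviving letter $r$ as $r-j$. The hypothesis $\min \ovr(\pi,h)=j+1$ forces each $h(x_i)$ to retain at least its maximum letter (which is $\ge j+1$), so $g$ is non-erasing. The key fact to verify, by iterating $Z_\ell = \mu(Z_{\ell-1})$ and observing that $\mu^{-1}$ both removes every $1$ and decrements the remaining letters, is that applying the same delete-and-relabel operation to $Z_k$ yields exactly $Z_{k-j}$. Since the operation commutes with concatenation, writing $Z_k = P\cdot h(\pi)\cdot Q$ produces a factorization of $Z_{k-j}$ in which $g(\pi)$ occurs as a contiguous block, so $g(\pi)$ is a factor of $Z_{k-j}$, and hence of $Z_k$ (because $Z_{k-j}$ is a prefix of $Z_k$). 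Finally $R_g(x_i)=R_h(x_i)-j$ by construction, which gives the claimed rank identity.

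Assertion (1) is then immediate: by Lemma~\ref{lemma5} choose any $h$ with $\pi \rightarrow_h Z_k$, and if $\min \ovr(\pi,h)>1$ apply (2) with $j = \min \ovr(\pi,h)-1$.

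For (3), the main obstacle is closing internal gaps in the rank set, which (2) alone cannot do. For example, for $\pi = xyx$ with $h(x)=1213121$ and $h(y)=5$ one has $\ovr(\pi,h)=\{3,5\}$, and (2) only shifts this to $\{1,3\}$. Nor can one hope to simply delete a single interior rank from $Z_k$: removing every $2$ from $Z_3 = 1213121$ leaves $11311$, which is not a Zimin factor, so the local ``filter'' trick used for (2) fails outright. My plan is therefore to invoke the reduction underlying Zimin's original proof \cite{zimin}: since $\pi \rightarrow Z_k$, the pattern is unavoidable, hence reducible to the empty pattern by repeatedly removing a variable that occurs exactly once and merging adjacent identical variables. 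If this reduction needs $m$ rounds, assign rank $r$ to every variable removed in round $r$, and construct $g$ by reversing the reduction, using the self-similar identity $Z_m = Z_{m-1}\, m\, Z_{m-1}$ at each stage to place the newly reintroduced variable around the fresh letter $r$. The result satisfies $g(\pi)\le Z_m \le Z_k$ and $\ovr(\pi,g)=\{1,\ldots,m\}$ by construction. The delicate step, which I expect to absorb the bulk of the verification, is confirming that the recursive reverse-reduction really does embed $\pi$ contiguously into $Z_m$ with exactly the assigned ranks at each level; here one uses the characterization of Zimin factors via property (C) to check that the merged occurrences of identical neighbours align correctly with the interleaving structure of $Z_m$.
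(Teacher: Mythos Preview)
The paper does not prove this lemma at all: it simply states that the three facts ``are consequences of the proof of Zimin theorem (see \cite{zimin} for details)''. So there is no paper-side argument to compare against, and your write-up for parts~(1) and~(2) is in fact more explicit than anything the paper offers. Your construction for~(2) --- delete all letters $\le j$ from each $h(x_i)$ and shift the survivors down by $j$ --- is exactly the right letter-wise morphism; it is non-erasing because every $h(x_i)$ contains its maximal letter $R_h(x_i)\ge j+1$, and iterating $\mu^{-1}$ shows it sends $Z_k$ to $Z_{k-j}$, so factorhood is preserved. Part~(1) then follows as you say.

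Your plan for~(3), however, contains a real error in the description of Zimin's reduction. You write that an unavoidable $\pi$ is ``reducible to the empty pattern by repeatedly removing a variable that occurs exactly once and merging adjacent identical variables''. That is not Zimin's reduction: the correct statement (recorded in this very paper just below the lemma) is that $\pi$ reduces to $\epsilon$ via a sequence of $\sigma$-deletions of \emph{free sets}, and the paper explicitly warns that singleton free sets do not suffice, exhibiting the unavoidable pattern
\[
\alpha\beta\alpha\gamma\alpha'\beta\alpha\gamma\alpha\beta\alpha'\gamma\alpha'\beta\alpha'
\]
in which every variable occurs at least three times. Your stated reduction cannot even take a first step on this pattern, so the construction you describe for~(3) would stall. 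There is also no ``merging adjacent identical variables'' step in $\sigma$-deletion.

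The overarching strategy you sketch --- reverse the reduction, assigning rank $r$ to the variables deleted in round $r$, and rebuild the embedding into $Z_m$ level by level via $Z_r = Z_{r-1}\,r\,Z_{r-1}$ --- is the right one and is precisely how Zimin's proof works; you just need to run it with the correct primitive. Concretely: at each stage one deletes a (possibly non-singleton) free set $F_r$, and in the reverse direction every $x\in F_r$ is assigned the value~$1$ while each surviving variable has its value pushed through $\mu$ with boundary $1$'s adjusted according to the $A,B$ witnessing freeness. This yields $R_g(x)=r$ for $x\in F_r$ and hence $\ovr(\pi,g)=\{1,\ldots,m\}$, where $m$ is the number of nonempty deletions. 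If you replace your faulty description of the reduction with free-set $\sigma$-deletion, the rest of your outline for~(3) goes through.
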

\myskip
\noindent We present Zimin algorithm based on free sets and
$\sigma$-deletions. \myskip
\begin{definition} \label{def13}
$F \subseteq V$ is a \textbf{free set} for $\pi \in V^+$ if and only if there exist sets $A,B \subseteq V$ such that $ F \subseteq B \setminus A$ where, for all $xy \leq \pi, x \in A$ if and only if $y \in B$.
\end{definition}
\myskip
\begin{definition}
The mapping $\sigma_F$ is a \textbf{$\sigma$-deletion} of $\pi$ if and only if $F\subseteq V$ is a free set for $\pi$ and $\sigma_F : V\rightarrow V \cup \{\epsilon\}$ is defined by

$$
\sigma_F(x) =
 \begin{cases}
 x &  \mbox{if } x \notin F \\
 \epsilon & \mbox{if } x\in F \\

 \end{cases}
$$
\end{definition}

The proof of the following fact can be found in \cite{algebraic}, Zimin's algorithm is based on this fact.
\myskip
\begin{lemma}
$\pi$ is an unavoidable pattern if and only if $\pi$ can be reduced
to $\epsilon$ by a sequence of $\sigma$-deletions.
\end{lemma}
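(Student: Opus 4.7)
The plan is to prove both implications by induction, leveraging the characterization of unavoidability from Lemma~\ref{lemma5} and the Zimin morphism $\mu$ that generates Zimin words.

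For the $(\Rightarrow)$ direction I would induct on $|\pi|$. From Lemma~\ref{lemma5} pick $h$ with $h(\pi) \leq Z_k$, and use Lemma~\ref{three}(1) to assume $\min \ovr(\pi, h) = 1$; since $Z_k$ contains no two adjacent $1$'s, every rank-$1$ variable is forced to the single letter $1$, so $F := \{x : h(x) = 1\}$ is nonempty. To witness that $F$ is free I take $A = \{x : h(x) \text{ ends with a letter } > 1\}$ and $B = \{x : h(x) \text{ starts with } 1\}$: clearly $F \subseteq B \setminus A$, and for every $xy \leq \pi$ the junction of $h(x)h(y)$ inside $Z_k$ crosses between an odd position (carrying $1$) and an even one (carrying a letter $> 1$), which is exactly $x \in A \iff y \in B$. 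To close the induction it suffices to show $\pi' := \sigma_F(\pi)$ is unavoidable, which I do by constructing a valuation into $Z_{k-1}$: let $\phi$ be the letter morphism $\phi(1) = \epsilon$, $\phi(i) = i - 1$ for $i > 1$. A one-line induction using $Z_k = \mu(Z_{k-1})$ and $\phi \circ \mu = \mathrm{id}$ gives $\phi(Z_k) = Z_{k-1}$. Then $h' := \phi \circ h$ restricted to non-$F$ variables is non-erasing (each carries a letter $\geq 2$) and $h'(\pi') = \phi(h(\pi)) \leq Z_{k-1}$, so $\pi'$ is unavoidable by Lemma~\ref{lemma5}.

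For the $(\Leftarrow)$ direction I would induct on the number of $\sigma$-deletions; the base $\pi = \epsilon$ is trivial. For the step, $\pi \to \pi' = \sigma_F(\pi)$ with $F$ free (witnesses $A, B$) and, by the induction hypothesis, $h'(\pi') \leq Z_{k'}$. I lift $h'$ to a valuation $h$ of $\pi$ in $Z_{k'+1} = \mu(Z_{k'})$: set $h(f) = 1$ for $f \in F$ and $h(x) = \mu(h'(x))$ for $x \notin F$, with boundary $1$'s possibly added or absorbed to reflect the $A, B$-type of $x$. To verify $h(\pi) \leq Z_{k'+1}$ I check condition (C) of Lemma~\ref{obs6} level by level. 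For $j \geq 2$, removing letters $< j$ from $h(\pi)$ yields the shift $+1$ of the result of removing letters $< j - 1$ from $h'(\pi')$, which is $(j-1)$-interleaved since $h'(\pi') \leq Z_{k'}$ and hence $j$-interleaved after the shift. The delicate level is $j = 1$, which needs alternation of $1$'s and non-$1$'s throughout $h(\pi)$.

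The main obstacle is precisely this $j = 1$ check: the naturally-produced boundary $1$'s of $\mu(h'(x))$ may collide with the inserted $1$'s of $F$-variables to produce a forbidden $11$. The cleanest fix I see is to first normalize $h'$ (exploiting Lemma~\ref{three}(2,3) and the freedom in choosing the embedding of $h'(\pi')$ inside $Z_{k'}$) so that $h'(x)$ starts with $1$ exactly when $x \in B$, and ends with a letter $> 1$ exactly when $x \in A$. With such an $h'$ the plain lift $h(x) = \mu(h'(x))$ already has correctly-typed boundaries, and the junction check splits into three routine cases: $x, y$ both non-$F$ (use the free-set equivalence $x \in A \iff y \in B$ together with alternation within each $\mu$-block), $x$ non-$F$ and $y \in F$ (then $x \in A$, so $\mu(h'(x))$ ends with a letter $> 1$ and meets $h(y) = 1$ correctly), and symmetrically $x \in F$ and $y$ non-$F$ (no two $F$-variables can be adjacent, as this would force $11$ contradicting $F \subseteq B \setminus A$).
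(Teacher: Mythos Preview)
The paper does not prove this lemma; it simply cites \cite{algebraic}. So there is no in-paper argument to compare against, only your proposal to assess.

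Your $(\Rightarrow)$ direction is essentially correct. One small point: Lemma~\ref{lemma5} is stated with $k$ equal to the number of distinct variables of the pattern, whereas you only exhibit $\pi' \to Z_{k-1}$; but since every Zimin word is itself an unavoidable pattern, any occurrence $h'(\pi') \leq Z_{k-1}$ already forces $\pi'$ to be unavoidable, and the induction closes.

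Your $(\Leftarrow)$ direction has a genuine gap at the normalization step. You claim one can replace $h'$ by a morphism with $h'(\pi') \leq Z_{k'}$ such that $h'(x)$ starts with $1$ exactly when $x \in B$ and ends with a letter $>1$ exactly when $x \in A$. This is in general impossible. Take $\pi = xfy$ with $F = \{f\}$ and witnesses $A = \{x\}$, $B = \{f\}$; then $\pi' = xy$, $x \in A$, and $y \notin B$. A normalized $h'$ would have $h'(x)$ ending with a letter $>1$ and $h'(y)$ starting with a letter $>1$, so $h'(\pi') = h'(x)h'(y)$ contains two adjacent letters $>1$, is not $1$-interleaved, and hence is not a factor of any Zimin word. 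Lemma~\ref{three} does not rescue this: it lets you shift the \emph{set of ranks}, not prescribe boundary letters against an externally imposed pair $(A,B)$.

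The fix is the one you already sketched before pivoting to normalization: keep $h'$ as it is, and for $x \notin F$ define $h(x)$ from $\mu(h'(x))$ by adding or deleting a single boundary $1$ so that $h(x)$ starts with $1$ iff $x \in B$ and ends with $1$ iff $x \notin A$ (this never yields the empty word, since $\mu(h'(x))$ always contains a letter $\geq 2$). Your three-case junction analysis then goes through verbatim for this adjusted $h$, and the $j \geq 2$ levels are unaffected because the adjustments involve only the letter $1$. This is the classical argument, and in compressed form it is exactly what Lemma~\ref{lemma20} and Theorem~\ref{theorem18} carry out later in the paper.
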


\myskip
Unfortunately it is insufficient to remove only singleton free sets. There are patterns, which require the removing more than one element free sets,
 for example the pattern $$ \alpha\beta\alpha\gamma\alpha' \beta\alpha\gamma\alpha \beta\alpha'\gamma\alpha'\beta\alpha' $$
  Therefore we can have exponentially many choices for free sets.

\begin{lemma}
If $\pi\rightarrow Z_k$ then $R_1(\pi)$ is a free set.
\end{lemma}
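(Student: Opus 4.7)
The plan is to witness $R_1(\pi)$ as a free set by exhibiting $A,B\subseteq V$ defined in terms of the \emph{boundary letters} of the images $h(x)$, where $h$ is any morphism with $h(\pi)\le Z_k$. Concretely, I would take
\[
A \;=\; \{x \in V : h(x) \text{ ends with a letter } \ne 1\},\qquad
B \;=\; \{y \in V : h(y) \text{ starts with the letter } 1\}.
\]
The reason this should work is the $1$-interleaving property of factors of $Z_k$ given by Lemma~\ref{obs6} (condition (C) with $j=1$): in any factor of $Z_k$, between any two adjacent letters exactly one equals~$1$. Hence for an adjacent pair $xy\le\pi$ the two letters meeting at the seam of $h(x)h(y)$ contain exactly one~$1$, which is precisely the biconditional $x\in A \iff y\in B$ demanded by Definition~\ref{def13}.

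First I would verify $R_1(\pi)\subseteq B\setminus A$. Any $z\in R_1(\pi)$ has $h(z)$ a non-empty factor of $Z_k$ all of whose letters equal~$1$; since Lemma~\ref{obs6} forbids the factor ``$11$'', this forces $h(z)=1$ as a single letter, so $h(z)$ both begins and ends with~$1$ and consequently $z\in B\setminus A$.

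Next, for an arbitrary $xy\le\pi$ the word $h(x)h(y)$ is a factor of $h(\pi)$ and hence of $Z_k$. In particular its internal length-$2$ window formed by the last letter $\ell$ of $h(x)$ and the first letter $f$ of $h(y)$ is itself a factor of $Z_k$, so by the $1$-interleaving property exactly one of $\ell,f$ equals~$1$. That is, $\ell\ne 1 \iff f=1$, which is precisely $x\in A \iff y\in B$.

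The main conceptual hurdle is spotting that the relevant data are the first and last letters of $h(x)$, not the rank of $x$: the na\"\i ve attempt $A=V\setminus R_1(\pi)$, $B=R_1(\pi)$ fails, because a rank-$\ge 2$ variable can be immediately followed in $\pi$ by another rank-$\ge 2$ variable (for instance $h(x)=121$, $h(y)=3$, whose concatenation $1213$ is a factor of $Z_3$). Once one passes from the variable's rank to its boundary letters, condition (C) does all of the remaining work and the verification is immediate.
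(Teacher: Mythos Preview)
Your proof is correct and takes essentially the same approach as the paper: the paper also chooses $B$ to be the variables whose image starts with~$1$ and $A$ to be the variables whose image does not end with~$1$. Your write-up is more detailed---you explicitly verify $R_1(\pi)\subseteq B\setminus A$ via the observation that $h(z)=1$ is forced, and you spell out the $1$-interleaving argument at the seam---whereas the paper's proof only names $A$ and $B$ and leaves the verifications implicit.
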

\begin{proof}
To satisfy the definition of a free set we need to give sets $A$ and $B$, such that $R_1(\pi) \subset B \setminus A$ and all predecessors of variables from $B$ are in $A$, all successors of variables from $A$ are in $B$. We put all variables starting with $1$ as the set $B$ and all variables that do not end with $1$ as set $A$.
\end{proof}

\begin{lemma}\label{keylemma}
If $\pi \rightarrow Z_k$ then $\pi_{(2)} \rightarrow Z_{k-1}$.
\end{lemma}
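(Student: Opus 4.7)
The plan is to lift a morphism $h$ realizing $\pi \rightarrow_h Z_k$ to a morphism $g$ realizing $\pi_{(2)} \rightarrow_g Z_{k-1}$ by exploiting the recursive identity $Z_k = \mu(Z_{k-1})$, where $\mu(1) = 121$ and $\mu(i) = i+1$ for $i \ge 2$. Under this identity each letter of $Z_{k-1}$ corresponds to one block of $Z_k$ and contributes exactly one non-$1$ letter (the middle $2$ of a $121$-block, or the letter $i+1$ itself for $i \ge 2$). I would first observe that the operation of removing every $1$ from $Z_k$ and decrementing each surviving letter by one gives back $Z_{k-1}$, and that the same operation applied to any factor $w$ of $Z_k$ yields a factor of $Z_{k-1}$, since the surviving positions of $w$ form a contiguous range in the ordered list of non-$1$ positions of $Z_k$, which is in order-preserving bijection with the positions of $Z_{k-1}$.

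The second ingredient I would establish is that every rank-$1$ variable $x$ of $\pi$ satisfies $h(x) = 1$. Indeed $h(x)$ is a nonempty factor of $Z_k$ over the alphabet $\{1\}$, and an easy induction on $k$ shows that $Z_k$ never contains the factor $11$, so $|h(x)| = 1$. Consequently, passing from $\pi$ to $\pi_{(2)}$ corresponds, at the level of images, to deleting certain single letters $1$ from $h(\pi)$.

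I would then define $g$ on variables of rank $\ge 2$ by letting $g(x)$ be the word obtained from $h(x)$ by removing all occurrences of $1$ and decrementing each remaining letter by one. Because $x$ has rank $\ge 2$, the word $h(x)$ contains a letter $\ge 2$, so $g(x)$ is non-empty. The remove-and-decrement operation commutes with concatenation and annihilates the rank-$1$ contributions, so $g(\pi_{(2)})$ coincides with the result of applying the operation directly to $h(\pi)$; by the first paragraph this is a factor of $Z_{k-1}$, yielding $\pi_{(2)} \rightarrow_g Z_{k-1}$.

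The main technical hurdle is the claim in the first paragraph that remove-and-decrement preserves the factor relation. This should follow cleanly from the block structure of $\mu$, but requires care at the endpoints of a factor that may begin or end inside a $121$-block; the key point is that dropping the boundary $1$'s of such a partial block does not disturb the contiguity of the surviving positions in $Z_{k-1}$.
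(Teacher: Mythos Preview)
Your argument is correct and follows essentially the same route as the paper: take a witnessing morphism $h$, note that rank-$1$ variables must satisfy $h(x)=1$, apply the ``delete all $1$'s (and shift letters)'' map to each $h(x)$ to define $g$, and use that this map sends factors of $Z_k$ to factors of $Z_{k-1}$. Your write-up is in fact more careful than the paper's on two points: you justify $h(x)=1$ via the absence of the factor $11$ in $Z_k$, and you explicitly argue factor preservation via contiguity of the non-$1$ positions, whereas the paper simply asserts both.
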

\begin{proof}
If $\pi \rightarrow Z_k$ then there exists morphism $h$, such that $h(\pi)$ is a subword of $Z_k$. $\V_1(\pi)$ is a set of variables $x$, such that $h(x)=1$. We shall notice that if we remove all $1$ from $Z_k$ we obtain $Z_{k-1}$. We define a new morphism $g$ for all variables from $\V \setminus \V_1$ as $g(x)=f(h(x))$, where $f$ is a function that removes all occurrences of $1$ from a word. Now we will show that $g(\pi_{(2)})$ is a subword of $Z_{k-1}$. We see that $g(\pi_{(2)}) = f(h(\pi_{(2)})=f(h(\pi))$ because $\pi$ differs from $\pi_{(2)}$ only in variables that equal $1$. So occurrence $g(\pi_{(2)})$ equals $h(\pi)$ with all $1$ deleted and $g(\pi_{(2)})$ is a subword of $Z_{k-1}$.
\end{proof}
\begin{theorem}\label{theorem18}
A pattern $\pi$ occurs in $Z_k$ if and only if  $ \V_1(\pi)$ is a free set and $\pi_{(2)} \rightarrow Z_{k-1}$.
\end{theorem}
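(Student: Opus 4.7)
The plan is to handle the two implications separately, leveraging the structural results already proved. The forward direction is essentially a chaining of preceding lemmas; the reverse direction is where the real work lies, since I must explicitly build a morphism $h$ witnessing $\pi\to_h Z_k$ out of a morphism $g$ witnessing $\pi_{(2)}\to_g Z_{k-1}$ together with the free-set certificate $(A,B)$ for $\V_1(\pi)$.

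For $(\Rightarrow)$ the plan is first to normalize the morphism. Lemma \ref{three}.3 yields an $h$ with $\pi\to_h Z_k$ whose rank set is an interval $\{1,\ldots,m\}$, so in particular $\V_1(\pi)$ is nonempty. The lemma immediately above the theorem then gives that $\V_1(\pi)$ is a free set (with the specific $(A,B)$ read off from which variables start, respectively do not end, with $1$ under $h$), and Lemma \ref{keylemma} supplies $\pi_{(2)}\to Z_{k-1}$.

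For $(\Leftarrow)$ the plan is to lift $g$ to $h$ along $\mu$. On each $x\in\V_1(\pi)$ I set $h(x)=1$. On each $x\notin\V_1(\pi)$ I start from $\mu(g(x))$, which is a factor of $Z_k$ because $\mu(Z_{k-1})=Z_k$ and $\mu$ is a morphism, and then I trim or append a single boundary $1$ so that
\[
 h(x)\text{ starts with }1\iff x\in B,\qquad h(x)\text{ ends with }1\iff x\notin A.
\]
The rank-$1$ assignment is compatible with these rules because $\V_1(\pi)\subseteq B\setminus A$. A short verification shows that such a single-letter boundary adjustment preserves being a factor of $Z_k$: $1$-interleaving of $Z_k$ guarantees that a $1$ sits immediately before every non-$1$ letter and immediately after every non-$1$ letter (whenever such a neighbour exists), so stripping or prepending a boundary $1$ maps factors to factors.

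The final and most delicate step is to verify $h(\pi)\le Z_k$ using the interleaving characterization of Lemma \ref{obs6}. For each $j\ge 2$, removing all letters $<j$ from $h(\pi)$ collapses the $\V_1$-blocks entirely and, on each remaining block, leaves the letters of $g(x)$ incremented by $1$ with letters $<j$ removed; the result is therefore $g(\pi_{(2)})$ shifted up by one with letters $<j$ removed, and its $j$-interleaving is inherited directly from $g(\pi_{(2)})\le Z_{k-1}$. The hard part is condition (C) for $j=1$: inside a single block $h(x)$ it is automatic from $h(x)\le Z_k$, but at each adjacency $xy\le\pi$ it demands exactly one of ``$h(x)$ ends with $1$'', ``$h(y)$ starts with $1$'' to hold, which by definition of $h$ reads $x\in A \iff y\in B$. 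This is precisely the free-set relation, and it is the one place where the hypothesis that $\V_1(\pi)$ is free is genuinely used.
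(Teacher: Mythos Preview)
Your proof is correct and follows the same outline as the paper. For $(\Rightarrow)$ the paper invokes Lemma~\ref{three} and Lemma~\ref{keylemma} exactly as you do (you are in fact slightly more careful, since you also explicitly call the preceding free-set lemma). For $(\Leftarrow)$ the paper gives no argument at all and simply defers to Zimin's original proof; your explicit construction---lift $g$ through $\mu$, adjust each boundary $1$ according to the free-set certificate $(A,B)$, then verify condition~(C) level by level---\emph{is} the Zimin argument, and it is also precisely the mechanism behind the paper's later Lemma~\ref{lemma20} and the \textsl{Uncompressed-Embedding} algorithm. So your reverse direction is not a different route but rather a self-contained exposition of what the paper outsources to the reference.
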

\begin{proof}
,,$\Rightarrow$''
This is a consequence of Lemma \ref{three}  and Lemma \ref{keylemma}.

\hspace{17pt},,$\Leftarrow$'' It follows from proof of Zimin theorem and can be found in \cite{zimin}.
\end{proof}

\section{Ranked pattern-matching}

It is not known (and rather unlikely true) if the pattern-matching
in Zimin words is solvable in polynomial time. We introduce the
following polynomially solvable version of this problem. \myskip
{\bf Compressed Ranked Pattern-Matching in Zimin Words:}
\begin{description}
\item{\bf Input:}\ given a pattern $\pi$ with $k$ variables and the
ranking function $R$

\item{\bf Output:}\ a compressed instance of an occurrence of $\pi$ in $Z_k$ with the given  ranking function,
or information that there is no such valuation, the values  of
variables are given in their compressed form
\end{description}

\noindent The algorithm for the ranked pattern matching can be
used as an auxiliary tool for pattern-matching without any ranking
function given. We can just consider all {\em sensible} ranking
functions. It gives an exponential algorithm since we do not know
what the rank sequence is. Although exponential, the set of {\em
sensible} ranking sequences can be usefully reduced due to special
properties of realizable rankings.

\subsection{Application of 2-SAT}
In one iteration we have not only to check if the set of variables of the smallest rank $i$
is a free set but we have to compute which of them start/end with a letter $i$.
However in a given iteration the letter $i$ can be treated as '1'.

In our algorithms we will use the function $FirstLast(\pi,W)$ which solves an instance of 2-SAT problem. It computes which variables should start-finish with
the smallest rank letter, under the assumption that variables from $W$ equal the smallest rank letter, to satisfy local properties of the Zimin word.
\myskip
In the function we can treat the smallest rank letter as $1$.
In Zimin word there are no two adjacent '1'. This leads to the
fact: for any adjacent variables $xy$ from $\pi$ either $x$ ends
with '1' or $y$ starts with '1'. If for a given pattern we know
that some variables start with '1' (or end with '1') we deduce
information about successors of this variable (or predecessors).
For example if we have a pattern $\beta\alpha\beta\gamma$ and we
know that $\alpha$ starts with '1' we deduce that $\beta$ does not
end with '1' and then deduce that $\gamma$ starts with '1'. For a
given set of variables that start and end with '1' we can deduce
information about all other variables in linear time (with respect
to the length of the pattern). \myskip
 For every variable $x$ from
the pattern we introduce two logic variables: $x^{first}$ is true iff
$x$ starts with '1', $x^{last}$ is true iff $x$ ends with '1'. Now
for any adjacent variables $xy$ we create disjunctions $x^{last}
\vee y^{first}$ and  $\neg x^{last} \vee \neg y^{first}$. If we
write the formula
\begin{align*}
 \begin{split}
 F= &(x_1^{last} \vee x_2^{first}) \wedge (\neg x_1^{last} \vee \neg x_2^{first}) \wedge (x_2^{last} \vee x_3^{first}) \wedge (\neg x_2^{last} \vee \neg x_3^{first}) \wedge \ldots  \\
&\ldots \wedge (x_{n-1}^{last} \vee x_n^{first}) \wedge (\neg
x_{n-1}^{last} \vee \neg x_n^{first})
\end{split}
\end{align*}
 we have an instance of 2-SAT problem.
For the variables $y_1, \ldots, y_s\in W$, that we know that valuate as '1',  we expand our formula to $F \wedge y_1^{first} \wedge y_1^{last} \wedge \ldots \wedge y_l^{first} \wedge y_s^{last}$.
\myskip
\begin{example}
If for a pattern $\beta\alpha\beta\gamma\alpha$ we know that valuation of $\alpha$ will be '1' we produce formula
\begin{align*}
 \begin{split}
 &( \beta^{first} \vee \alpha^{last} ) \wedge ( \neg \beta^{first} \vee \neg \alpha^{last} ) \wedge ( \alpha^{first} \vee \beta^{last} ) \wedge  \\
 \wedge &( \neg \alpha^{first} \vee \neg \beta^{last} ) \wedge ( \beta^{first} \vee \gamma^{last} ) \wedge
 ( \neg \beta^{first} \vee \neg \gamma^{last} )\wedge  \\
 \wedge &( \gamma^{first} \vee \alpha^{last} ) \wedge ( \neg \gamma^{first} \vee \neg \alpha^{last} ) \wedge ( \alpha^{first}) \wedge (\alpha^{last})
\end{split}
\end{align*}
In the general case there can be many solutions of the formula but in our example the only solution is: $\alpha^{first}=\alpha^{last}=\gamma^{first}=true, \hspace{3pt} \beta^{first}=\beta^{last}=\gamma^{last}=false$, which means that $\alpha $ starts and ends with '1', $\beta$ starts and ends with non-'1', $\gamma$ starts with '1' and ends with non-'1'.
\end{example}
A positive solution to this problem is necessary for the existence
of a valuation $val$ of the variables from pattern $\pi$, such that
$val(\pi) \leq Z_k$ and each variable $x$ starts (ends) with '1'
iff $x^{first}$ is true (resp. $x^{last}$ is true) in the
solution.
\myskip
It is well known that 2-SAT can be computed efficiently, consequently:

\begin{lemma}
We can execute $FirstLast(\pi,W)$ in linear time.
\end{lemma}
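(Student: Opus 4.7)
The plan is to observe that the claim reduces to two routine facts: the 2-SAT instance built in the excerpt has size linear in $|\pi|$, and 2-SAT itself is solvable in time linear in its input size by a standard result. First I would count clauses explicitly: for each of the $|\pi|-1$ adjacent pairs of variables in $\pi$ we emit exactly two binary clauses ($x^{last}\vee y^{first}$ and $\neg x^{last}\vee \neg y^{first}$), and for each $y\in W$ we emit the two unit clauses $y^{first}$ and $y^{last}$. The number of Boolean variables is $2k$, where $k$ is the number of distinct pattern variables, and $k\le |\pi|$; the number of clauses is $2(|\pi|-1)+2|W|=O(|\pi|)$. In particular $|W|\le k\le |\pi|$, so the whole formula has size $O(|\pi|)$.

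Next I would invoke the classical Aspvall--Plass--Tarjan algorithm. Build the implication graph on the $2\cdot 2k$ literals: every binary clause $\ell_1\vee \ell_2$ contributes the two arcs $\neg \ell_1\to \ell_2$ and $\neg \ell_2\to \ell_1$, while each unit clause $y^{first}$ (resp.\ $y^{last}$) from $W$ is handled by adding the forcing arc $\neg y^{first}\to y^{first}$ (resp.\ analogously), which guarantees that any satisfying assignment sets the literal to true. Then compute strongly connected components using Tarjan's linear-time algorithm, declare the formula unsatisfiable iff some variable shares an SCC with its negation, and otherwise recover an assignment from the reverse topological order of the condensation. The graph has $O(k)$ vertices and $O(|\pi|)$ arcs, so every step runs in $O(|\pi|)$ time; the returned assignment directly encodes which pattern variables start/end with the smallest-rank letter.

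There is no substantial obstacle here; the only subtlety worth stating is that the unit clauses coming from $W$ must be incorporated into the same implication-graph framework (rather than being treated as a separate preprocessing that could propagate in more than linear time), which is what the forcing-arc trick above achieves. Thus $FirstLast(\pi,W)$ runs in time $O(|\pi|)$, as claimed.
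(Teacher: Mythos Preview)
Your proposal is correct and follows the same approach as the paper: the paper's entire argument is the one-line remark ``It is well known that 2-SAT can be computed efficiently,'' and you have simply spelled out the two implicit ingredients (the formula has size $O(|\pi|)$, and the Aspvall--Plass--Tarjan algorithm solves 2-SAT in linear time). Your explicit clause count and the handling of unit clauses via forcing arcs are fine and more detailed than anything the paper provides.
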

\subsection{The algorithm:  compressed and uncompressed versions}
Now we present two versions of the algorithm deciding if pattern $\pi$
occurs in Zimin word with given rank sequence and computing the values of variables, if there is
an occurrence. First of these algorithms uses
uncompressed valuations of variables and uses exponential space
and second one operates on compressed valuations. If the answer is
positive algorithms give valuations of variables, ie. morphism
$val$ such that $val(\pi) \leq Z_k$.

Denote by $alph(\pi)$ the set of symbols (variables) in $\pi$.  Let $\pi$ be the pattern with
given rank sequence which maximal rank equals $K$, $|\pi|=n,
|alph(\pi)|=k$. We define the operation $firstdel(i,s), lastdel(i,s)$ of
removing the first, last letter from $s$, respectively, if this letter is
$i$ (otherwise nothing happens), similarly define $firstinsert(i,s),
lastinsert(i,s)$: inserting letter $i$ at the beginning or at the end of $s$
if there is no $i$.

In general, maximal rank can occur multiple times, but in this case, because of properties mentioned earlier in this paper, the pattern with this rank sequence is avoidable and cannot occur in Zimin word. Both algorithms assume that there is only one occurrence of the maximal rank (we denote by $x_K$ the variable with the maximal rank).
\newpage

\begin{algorithm}[ht]
 $K:=$ maximal rank\;
 $\V_i$ is the set of variables of rank $i$, for $1\le i \le K$\;
$val(x_K):=1$\;
\medskip \For {  $i=K-1$ {\bf downto} 1} {
	\ForEach {$x\in \V_i$} {
		 $val(x):=1$\;	
		\If { $FirstLast(\pi_{(i)},\V_i)$} {
			 {\bf comment}: {\it we know now which variables in $\pi_{(i)}$ start/finish\\
			 \hspace{49pt} with $i$ due to evaluation of a corresponding 2SAT}\\
			\ForEach { $x\in \V_{i+1}\cup \V_{i+2} \cup \ldots \cup\V_K$  } {
		$val(x):=\mu(val(x))$\;
		\lIf {$not(x^{first})$} {$val(x):= firstdel(1,val(x))$\;}
		\lIf {$not(x^{last})$} {$val(x):= lastdel(1,val(x))$\;}
		\lIf {$x^{first}$} {$val(x):= firstinsert(1,val(x))$\;}
		\lIf {$x^{last}$} {$val(x):= lastinsert(1,val(x))$\;}
		}
	}
	\lElse {\Return false};
}
}
  \label{uncompressedembedding}
 \caption{\sl Uncompressed-Embedding$(\pi,Z_K)$}
\end{algorithm}

The next algorithm is a space-efficient simulation of the
previous one.

We are not using the morphism $\mu$, instead of that the
values of variables are maintained in a compressed form, we are
adding to the left/right decreasing sequence of integers.

\begin{algorithm}[ht]
$K:=$ maximal rank\;
$\V_i$ is the set of variables of rank $i$\;
\medskip
 \For {$1\le i \le K$} {
 $val(x_K):=K$\;
 }
 \medskip
 \For {$i=K-1$ {\bf downto} 1} {
	 \ForEach { $x\in \V_i$} {
		 $val(x):=k$\;
		\If {$FirstLast(\pi_{(i)},\V_i)$} {
			\ForEach {$x\in \V_{i+1}\cup \V_{i+2} \cup \ldots \cup \V_K$}{
				\lIf { $x^{first}$} {$val(x):=firstinsert(i,val(x))$\;}
				\lIf { $x^{last}$ } {$val(x):= lastinsert(i,val(x))$\;}
				\lElse {\Return false;}
			}
		}
	}
	
}
\caption{\sl Compressed-Embedding$(\pi,Z_K)$}
\label{alg:compressedembedding}
\end{algorithm}

{\bf Example}
\
 Below we present an example of the Uncompressed-Embedding algorithm for
 $$ \pi=\s{10} \delta \s{3} \alpha \s{3} \gamma
\s{3} \beta \s{3} \lambda  \s{3} \gamma \s{3} \alpha \s{3} \delta
\s{3} \alpha \s{3} \gamma \s{3} \beta \s{3} \alpha$$
\s{63} with the rank sequence $\s{25}4 \s{4} {1} \hspace{4pt} {3} \s{4} {2} \s{4}{5} \s{4} {3} \s{4}{1} \s{4} {4} \s{4}{1} \s{4} {3} \s{4} {2} \s{4} {1} \s{5}$

\vspace{6pt}
\begin{tabular}{c}
$\s{8} \lambda_\downarrow $ \\
$\s{1} \overbrace{1} $\\
 First we set the variable with the highest rank.   $val(\lambda)=1$ \\
           \hline
           \\
           $\delta_\downarrow  \lambda \delta_\downarrow$\\
           $ \overbrace{121} $ \\
           $i=4$. We set $val(\delta)=1$ and morph $val(\lambda)=121$. \\
           From solution of 2-SAT we know that $\delta$ starts and ends with '1',\\
           $\lambda$ starts and ends with non-'1', we set $val(\lambda)=2$.\\
\hline
         $  \s{16} \delta \gamma_\downarrow \s{3}\lambda   \gamma_\downarrow \s{7}\delta \gamma_\downarrow   $\\
         $  \overbrace{121}\overbrace{3}\overbrace{121} $  \\
         $i=3$. We have $val(\gamma)=1, val(\delta)=121, val(\lambda)=3$.\\
         From solution of 2-SAT: $\gamma$ starts and ends with '1', \\
         $\delta$ and $\lambda$ start and end with non-'1',\\
         therefore we set $val(\lambda)=3$, $val(\delta)=2$\\
\hline
          $\s{31} \delta \s{14} \gamma \s{3} \beta_\downarrow \s{3}\lambda \s{15} \gamma \s{17}\delta \s{13} \gamma \beta_\downarrow  $\\
          $ 121 \overbrace{3}\overbrace{121}\overbrace{4}\overbrace{121} \overbrace{3}\overbrace{121} $ \\
            $i=2$ : $val(\beta)=1, val(\gamma)=121, val(\delta)=3, val(\lambda)=4$.\\
            From solution of 2-SAT: $\beta$ starts and ends with '1', $\gamma,\delta$ start with '1',  \\
            end with non-'1', $\lambda$ starts and ends with non='1' \\
            Therefore $val(\gamma)=12, val(\delta)=13, val(\lambda)=4$. \\
\hline
           $ \s{36} \delta \s{2} \alpha_\downarrow \s{5} \gamma \s{15} \beta \s{15}\lambda \s{15} \gamma    \s{2} \alpha_\downarrow \s{3}\delta \s{3} \alpha_\downarrow \s{2} \gamma \s{18} \beta \alpha_\downarrow  $\\
 $1213\overbrace{1214}\overbrace{1213}\overbrace{121}\overbrace{5}\overbrace{1213}\overbrace{1214}\overbrace{1213}
 \overbrace{121} $ \\
              $i=1$ : $val(\alpha)=1, val(\beta)=121, val(\gamma)=1213,$\\
            $val(\delta)=1214, val(\lambda)=5$.  \\
 From solution of 2-SAT: $\alpha,\lambda$ start and ends with '1',  \\
 $\beta$ starts with 1, ends with non-'1', $\gamma,\delta$ start and end with non-'1'. \\

           \end{tabular} \\

\vspace{10pt} \noindent Finally we have valuation of variables,
such that $val(\pi) \leq Z_5$.

            $\s{20} \delta \s{14} \alpha \s{13} \gamma \s{14} \beta \s{13} \lambda \s{14} \gamma \s{13} \alpha \s{14} \delta \s{13} \alpha \s{14} \gamma \s{13} \beta \s{14} \alpha $\\
            $12131\overbrace{214}\overbrace{1}\overbrace{213}\overbrace{12}\overbrace{151}\overbrace{213}\overbrace{1}\overbrace{214}\overbrace{1}\overbrace{213}\overbrace{12}\overbrace{1}$
\myskip
{\bf Example}
\myskip Now we present an example of
the $Compressed-Embedding$ algorithm for a different pattern. Now the function $val$ will be a compact representation instead of the full representation used in the last algorithm. We take:
$$\pi\ =\ \s{25} \alpha \s{3} \gamma \s{3} \beta \s{3} \delta \s{3} \eta \s{3} \alpha \s{3} \gamma \s{3} \beta \s{3} \zeta \s{3} \eta \s{3} \alpha$$
and the ranking sequence
$$\hspace*{1.8cm} \s{1}1 \s{4} 3 \s{4} 2 \s{4} 4 \s{3} 5 \s{3} 1 \s{4} 3 \s{4} 2 \s{3} 6 \s{4} 5 \s{3} 1$$
\myskip
\begin{tabular}{c}
 First we set the variable with the highest rank.   $val(\zeta)=6$ \\
           \hline
           \\
           For $i=5$ we have $\pi_{(5)}=\eta\zeta\eta$\\
           We set $val(\eta)=5$ and solve 2-SAT for \\
           $F=(\eta^{last} \vee \zeta^{first}) \wedge (\neg \eta^{last} \vee \neg \zeta^{first}) \wedge (\zeta^{last} \vee \eta^{first}) \wedge $ \\ $\wedge (\neg \zeta^{last} \vee \neg \eta^{first}) \wedge (\eta^{first}) \wedge (\eta^{last})$.\\
           From $FirstLast(\pi_{(5)},{\eta})$ we know that $\zeta_{first}$ and $\zeta_{last}$ are false,\\
           therefore we don't change $val(\zeta)$\\
\hline
         $i=4$. $\pi_{(4)}=\delta\eta\zeta\eta$\\
         We set $val(\delta)=4$ and execute $FirstLast(\pi_{(4)},{\delta})$. There are two solutions, \\
         we choose one of them and obtain $\eta^{first}=\eta^{last}=0$ and $\zeta^{first}=\zeta^{last}=1$,\\
         therefore we change $val(\zeta)=464$.\\
\hline
            $i=3$ : $\pi_{(3)}=\gamma\delta\eta\gamma\zeta\eta$.\\
            We set $val(\gamma)=3$ and execute $FirstLast(\pi_{(3)},{\gamma})$. We know that\\
            $\delta^{first}=\delta^{last}=0,\s{2}\eta^{first}=1,\s{2}\eta^{last}=0$ and $\zeta^{first}=\zeta^{last}=0$,  \\
            therefore we only add $3$ at the beginning of $val(\eta)$, ie. $val(\eta)=35$\\
\hline
              $i=2$ : $\pi_{(2)}=\gamma\beta\delta\eta\gamma\beta\zeta\eta$.\\
            We set $val(\beta)=2$ and execute $FirstLast(\pi_{(2)},{\beta})$. We know that\\
 $\eta^{first}=\eta^{last}=1$ and rest of logic variables equal $0$.  \\
 We only add $2$ at the beginning and end of $val(\eta)$ ($val(\eta)=2352$)\\
\hline
              $i=1$ : $\pi_{(1)}=\pi=\alpha\gamma\beta\delta\eta\alpha\gamma\beta\zeta\eta\alpha$.\\
            We set $val(\alpha)=1$ and execute $FirstLast(\pi_{(1)},{\alpha})$. We know that\\
 $\gamma^{last}=\delta^{first}=\eta^{first}=\zeta^{first}=1$ and rest of logic variables equal $0$.  \\
 We add $1$ at the beginning of $\delta,\eta,\zeta$ and at the end of $\gamma$. \\
 We change: $val(\gamma)=31,val(\delta)=14,val(\eta)=12352,val(\zeta)=1464$. \\
           \end{tabular} \\

\vspace{10pt} \noindent Finally we have the compressed valuation of the variables (below we show full representations):
\myskip
\begin{figure}[ht]
\begin{center}
\begin{pspicture}(0,0)(12,2)
\definecolor{color34b}{rgb}{0.8509803921568627,0.8509803921568627,0.8509803921568627}
\definecolor{color34}{rgb}{0.996078431372549,0.996078431372549,0.996078431372549}
\definecolor{color32e}{rgb}{0.7098039215686275,0.7098039215686275,0.7098039215686275}
\definecolor{color32}{rgb}{0.00392156862745098,0.00392156862745098,0.00392156862745098}
\usefont{T1}{ptm}{m}{n}
\rput(0,2){{\Large $\alpha$}}\rput(1.2,2){{\Large $\beta$}}\rput(2.7,2){{\Large $\gamma$}}\rput(4.25,2){{\Large $\delta$}}\rput(6.25,2){{\Large $\eta$}}\rput(9.25,2){{\Large $\zeta$}}
\rput(0, 1.3){{\psframebox[linewidth=0.004,linecolor=color32,framearc=0.5,shadow=true,shadowcolor=color32e,fillstyle=solid,fillcolor=color34]{1}}
\rput(0.9,0.11){ \psframebox[linewidth=0.004,linecolor=color32,framearc=0.5,shadow=true,shadowcolor=color32e,fillstyle=solid,fillcolor=color34]{2}}
\rput(2.5,0.11){ \psframebox[linewidth=0.004,linecolor=color32,framearc=0.5,shadow=true,shadowcolor=color32e,fillstyle=solid,fillcolor=color34]{3 1}}
\rput(4,0.11){ \psframebox[linewidth=0.004,linecolor=color32,framearc=0.5,shadow=true,shadowcolor=color32e,fillstyle=solid,fillcolor=color34]{1 4}}
\rput(6,0.11){ \psframebox[linewidth=0.004,linecolor=color32,framearc=0.5,shadow=true,shadowcolor=color32e,fillstyle=solid,fillcolor=color34]{1 2 3 5 2}}
\rput(9,0.11){ \psframebox[linewidth=0.004,linecolor=color32,framearc=0.5,shadow=true,shadowcolor=color32e,fillstyle=solid,fillcolor=color34]{1 4 6 4}}
}

\rput(0,0.6){1} \rput(1.2,0.6){2} \rput(2.8,0.6){3 1} \rput(4.3,0.6){1 4} \rput(6.4,0.6){13121512} \rput(9.4,0.6){
141213121612131214}
\end{pspicture}
\end{center}
\end{figure}
\myskip
\noindent Our algorithms rely on the following lemma.
\myskip
\begin{lemma}\label{lemma20}
Let $i\in \{1,\ldots,k\}$ and the pattern $\pi_{(i+1)}$ occurs in $Z_{k-i}$. Pattern $\pi_{(i)}$ (equal $\pi_{(i+1)}$ with additional variables from $\V_i$) occurs in $Z_{k-i+1}$ iff the corresponding 2-SAT problem has a solution. Moreover every immersion of $\pi_{(i)}$, such that valuations of variables from $\V_i$ equal '1', corresponds to the solution of 2-SAT problem, such that valuation of every variable satisfies logic constraints on first and last character.
\end{lemma}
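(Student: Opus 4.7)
The plan is to reduce everything to the interleaving characterization of Zimin factors (Lemma \ref{obs6}, condition (C)). In the present iteration letter $i$ plays the role of the new ``$1$'' and all variables in $\V_i$ are set to the single letter $i$; what we must check is that the string obtained from $\pi_{(i)}$ is a factor of $Z_{k-i+1}$ (viewed with $i$ as smallest letter). By (C) at the smallest level, this is equivalent to the condition that letters equal to $i$ alternate with letters larger than $i$ throughout the concatenation. Since each individual $val(x)$ is already a factor of a Zimin word by the inductive hypothesis on $\pi_{(i+1)}$, internal alternation inside every $val(x)$ is automatic, so the only thing to verify is the alternation at the junctions between adjacent variables in $\pi_{(i)}$.

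For the direction ``$\Leftarrow$'', I would start from a valuation $g$ with $g(\pi_{(i+1)}) \le Z_{k-i}$, apply $\mu$ to obtain $\mu(g(\pi_{(i+1)})) \le Z_{k-i+1}$, and set $val(x)=i$ for $x\in\V_i$. Then, using a satisfying assignment of the 2-SAT instance, I would modify the boundary letter of each $val(x)$ for $x\in\V_{i+1}\cup\dots\cup\V_K$ by $firstinsert$/$firstdel$/$lastinsert$/$lastdel$ so that $val(x)$ begins (resp.\ ends) with $i$ iff $x^{first}$ (resp.\ $x^{last}$) is true. Each such boundary adjustment preserves membership in the set of Zimin factors (Fact \ref{fact3} together with the observation that after $\mu$ every variable value already has $i$ on each boundary, so inserting or deleting a boundary $i$ yields another valid factor). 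The 2-SAT clauses $(x^{last}\vee y^{first})\wedge(\neg x^{last}\vee\neg y^{first})$ guarantee that at every junction exactly one of the two sides contributes the letter $i$; combined with the unit clauses $y^{first}\wedge y^{last}$ for $y\in\V_i$ this produces the required alternation, and (C) is satisfied, so $val(\pi_{(i)}) \le Z_{k-i+1}$.

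For ``$\Rightarrow$'', given an immersion $h$ of $\pi_{(i)}$ into $Z_{k-i+1}$ in which every variable of $\V_i$ is valued $i$, the alternation of $i$ with larger letters at each junction $xy$ of $\pi_{(i)}$ says exactly that $h(x)$ ends with $i$ or $h(y)$ begins with $i$, but not both. Reading these as the truth values of $x^{last}$ and $y^{first}$, we obtain a satisfying assignment of the 2-SAT formula (the unit clauses for $\V_i$ hold trivially). The ``moreover'' part is the same correspondence read in both directions: the first/last character of every variable in any admissible valuation is forced to be a Boolean value consistent with the SAT clauses, and conversely every SAT solution is realized by some adjustment of the morphism obtained from $\mu(g)$, as above.

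The main obstacle is confirming that the boundary edits preserve the property of being a Zimin factor inside each individual $val(x)$; this is handled by the observation that after applying $\mu$ every $val(x)$ has shape $i\,w\,i$ with $w$ containing no two adjacent $i$'s, so adding or removing a boundary $i$ still yields a word satisfying (C) at every level, hence a factor of the appropriate Zimin word.
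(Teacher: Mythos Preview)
Your argument is essentially correct and is somewhat more self-contained than the paper's. The paper routes the ``$\Leftarrow$'' direction through free sets: it sets $A=\{v:v^{last}=0\}$, $B=\{v:v^{first}=1\}$, observes that the 2-SAT clauses make $\V_i\subseteq B\setminus A$ a free set, and then invokes Theorem~\ref{theorem18} (hence ultimately Zimin's theorem) for existence, only afterwards describing the $\mu$-based construction. You skip that detour and verify the constructed word directly via condition~(C) of Lemma~\ref{obs6}. This is a legitimate and arguably cleaner route; the paper's version has the advantage of making explicit the link between 2-SAT solutions and free sets, which is conceptually useful elsewhere in the paper.

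Two small technical points need repair. First, your claim that ``after $\mu$ every variable value already has $i$ on each boundary'' is false: $\mu(j)=j{+}1$ for $j>1$, so if $g(x)$ begins with a letter $>1$ then $\mu(g(x))$ does not begin with $1$. Fortunately the conclusion you need still holds: \emph{firstdel} returns a suffix of a Zimin factor, hence a Zimin factor; and \emph{firstinsert} prepends $1$ to a word beginning with a letter $>1$, which is easily seen to preserve condition~(C). So the boundary edits are harmless, just not for the reason you gave.

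Second, you only verify condition~(C) at the bottom level (the $1$-interleaving at junctions). You should also say why the higher levels of~(C) hold for $val(\pi_{(i)})$. The clean way is to note that erasing all $1$'s from each $val(x)$ (boundary-edited $\mu(g(x))$) returns exactly $g(x)$ with every letter incremented, and the $\V_i$ variables vanish; hence erasing $1$'s from $val(\pi_{(i)})$ yields the letter-shifted $g(\pi_{(i+1)})$, which is a factor of the letter-shifted $Z_{k-i}$ and therefore satisfies~(C) at all remaining levels. With these two fixes your proof goes through.
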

\begin{proof}
$\s{5}''\Leftarrow''$\\
First we observe that solution of 2-SAT problem guarantees that $\V_i$ is a free set. We put $A=\{v \in \pi_{(i)}: v^{last}=0\} $ and $B=\{v \in \pi_{(i)}: v^{first}=1\} $, which satisfy Definition \ref{def13}.
From Theorem \ref{theorem18} $\pi_{(i)}$ occurs in $Z_{k-i+1}$. Now we use Zimin morphism $\mu$ on variables from $\pi_{(i+1)}$ and set every variable from $\V_i$ to '1', then we modify (adding or removing '1' at the beginning or end) every valuation accordingly to logic constraints from the 2-SAT solution. Similarly as in the Zimin theorem proof (\cite{zimin}) we see that properties of $A$ and $B$ guarantee that modified valuations concatenate into proper immersion in $Z_{k-i+1}$.
\\$\s{5}''\Rightarrow''$\\
We have immersion of $\pi_{(i)}$ in $Z_{k-i+1}$ with valuation $val(v)$, such that for $v\in\V_i$ $val(v)=1$. We give a solution to 2-SAT problem as follows: for every variable $v$ from $\pi_{(i)}$ we set $v^{first}=1$ iff $val(v)$ starts with '1', $v^{last}=1$ iff $val(v)$ ends with '1'. Because Zimin word is $1$-interleaved this solution is correct.
\end{proof}
\myskip
\begin{theorem}
The compressed ranked pattern matching in Zimin words  can be
solved in time $O(n*k)$ and (simultaneously) space $O(n+k^2)$,
where $n$ is the size of the pattern and $k$ is the highest rank
of a variable. A compressed instance of the pattern can be
constructed within the same complexities, if there is any
solution.
\end{theorem}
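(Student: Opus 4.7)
The plan is to verify that the \textsl{Compressed-Embedding} algorithm produces a correct compressed instance and then analyze its resource usage. Correctness is by downward induction on the iteration index $i$, with the invariant: after processing rank $i$, the function $val$ restricted to the variables of $\pi_{(i)}$ is a valid compressed embedding of $\pi_{(i)}$ into $Z_{K-i+1}$, with every variable of rank $i$ assigned the symbol $i$. The base case $i=K$ is trivial since $\pi_{(K)}=x_K$ and $val(x_K)=K$ encodes the one-letter factor of $Z_1$.

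For the inductive step I would invoke Lemma \ref{lemma20} directly: assuming $\pi_{(i+1)}$ embeds in $Z_{K-i}$, the 2-SAT instance produced by $FirstLast(\pi_{(i)},\V_i)$ is satisfiable iff $\pi_{(i)}$ embeds in $Z_{K-i+1}$. When satisfiable, the lemma's proof explicitly constructs the extended valuation by applying $\mu$ to old values and then adjusting the first/last letters according to the 2-SAT solution; the \texttt{firstinsert}/\texttt{lastinsert} operations perform precisely this adjustment in compressed form, using the fact (Fact \ref{fact3}) that boundary letters of a compressed representation coincide with those of the uncompressed value. When the 2-SAT call fails, the same lemma certifies that no extension exists, so the algorithm correctly returns \textsf{false}. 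Iterating from $i=K{-}1$ down to $i=1$ yields a compressed embedding of $\pi = \pi_{(1)}$ in $Z_K$, or a negative answer.

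For complexity: there are $K\le k$ outer iterations. Each iteration builds a 2-SAT formula whose clauses are indexed by the adjacent pairs of $\pi_{(i)}$, a formula of size $O(n)$, solvable in linear time. The inner update loop touches each variable once, performing a single $O(1)$ insertion at one of the two ends of its compressed representation. Since the total number of variables is at most $k$ (each rank class contributing at least one) and the compressed representation of any factor of $Z_k$ has length at most $2k-1$, the per-iteration work is $O(n+k)$, summing to $O(nk+k^2)=O(nk)$. For space, the pattern itself is $O(n)$, and the evolving valuations occupy at most $k\cdot (2k-1)=O(k^2)$ cells, while the 2-SAT instance is discarded after each iteration; the total is therefore $O(n+k^2)$.

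The main subtle point I expect is justifying that \texttt{firstinsert} and \texttt{lastinsert} preserve the invariant that $val(x)$ is the exact compressed representation of the underlying uncompressed word. This hinges on two structural facts: the compressed form is always a unimodal (increase-then-decrease) sequence whose extreme boundary letter governs the interaction with $\mu$, and at iteration $i$ the letter $i$ is strictly smaller than every letter currently present in $val(x)$ for $x\in\V_j$ with $j>i$, so inserting $i$ at a boundary cannot cause any previously suppressed letter to re-emerge. Everything else follows mechanically from the inductive invariant and the complexity accounting above.
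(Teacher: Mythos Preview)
Your proposal is correct and follows essentially the same route as the paper's own proof: run \textsl{Compressed-Embedding}, justify each iteration via Lemma~\ref{lemma20}, and then account for time ($k$ iterations, each dominated by an $O(n)$ 2-SAT call plus $O(k)$ boundary updates) and space (pattern plus $O(k^2)$ for the compressed valuations). Your explicit inductive invariant and the discussion of why \texttt{firstinsert}/\texttt{lastinsert} preserve the compressed form add rigor the paper leaves implicit, but the underlying argument is the same.
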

\begin{proof}
We use the $Compressed-Embedding$ algorithm. First we embed $\pi_{(k)}$ into $Z_1$. Then we check for  a subsequent $i$, $k-1\geq i \geq 1$, if it is possible to embed $\pi_{(i)}$ having  embedding of $\pi_{(i+1)}$ using suitable 2-SAT and Lemma \ref{lemma20}. Non-existence of immersion of $\pi_{(i)}$ implies that whole $\pi$ does not occur into $Z_k$. Otherwise we get compressed valuation of variables, such that $val(\pi) \leq Z_k$.

We will consider time complexity of the Compressed-Embedding algorithm.
Because $|V_1| \cup |V_2| \cup \ldots \cup |V_k|=k$ first ${\bf for \ each}$ loop
executes exactly $k$ times during whole execution.

We solve 2-SAT problem exactly $k-1$ times for $\pi_{(i)}$ (length of formula is linear with respect to $|\pi_{(i)}|$) and for every $i$ $|\pi_{(i)}| \leq |\pi| =n$. That gives complexity $O(n*k)$ for this step.

We execute second {\bf for each} loop $k-1$ times. In each iteration we have:
$$|\V_{i+1}\cup \V_{i+2} \cup \ldots \cup \V_K| \leq |\V_{1}\cup \V_{2} \cup \ldots \cup \V_K| = k$$
\noindent Hence complexity for this step is $O(k^2)$.

\noindent
Finally, the algorithm has time complexity $O(k + n*k + k^2)=O(n*k)$, because $k \leq n$.

We need to remember the pattern of size $n$ and actual compact representations of the variables ($O(k^2)$). In each step we generate 2-SAT formula which has size $O(n)$. The algorithm needs $O(n+k^2)$ memory.
\end{proof}

\section{Different valuations}
We can obtain different valuations of the variables by choosing different solutions to our 2-SAT formula.

 We change the way of deciding which variable valuation starts or ends with '1' because finding all solutions to 2-SAT problem is in $\#P$. We will use adjacency graph $AG(\pi)$, i.e. graph with vertex set containing two copies of variables set. For each variable $v$ there are two vertices in $AG$ - $v^L$ and $v^R$. There is and edge between $v^L$ and $w^R$ if and only if $vw$ is a subword of $\pi$. This is a bipartite graph, one part contains elements of the form $x^L$ and the other elements $x^R$.

We will valuate vertices if $AG$ with $1$ or $0$. Variable $v^L$ will have value $1$ iff. valuation of $v$ ends with '1'. Variable $v^R$ will have value $1$ iff. valuation of $v$ starts with '1'. Whenever we know a vertex value we can give values to all vertices from the same connected component (because Zimin word is 1-interleaved).

\subsection{Shortest instance}

Let $\pi$ be an unavoidable pattern with given rank sequence. $\pi$ occurs in $Z_K$ because it is unavoidable. By choosing shorter valuations for some variables during the algorithm we can generate such valuation $val$, that $val(\pi)$ is the shortest among all possible valuations.

After i-th step of algorithm we have valuation $val$ of $\pi_{(i+1)}$ such that $val(\pi_{(i+1)})\leq Z_{i-1}$. Then we morph valuations with morphism $\mu$ ($val:=\mu(val)$) and obtain $val(\pi_{(i+1)})\leq Z_{i}$. Finally we try to insert variable with rank $i$ and manipulate starting and ending '1'. It is easy to see that to obtain shortest $val(\pi_{(i)})$ we need to try to remove left '1' from valuation of $first(\pi_{(i)})$ (first variable) and right '1' from $last(\pi_{(i)})$ (last variable). Manipulating starting and ending '1' of variables that are in the inside $\pi_{(i)}$ (that are not first or last variable) will not change the length of whole valuation. That is because if we remove '1' from one variable it will be added to its neighbour and length will be the same.

Let $v$ be the first variable of $\pi_{(i)}$. If rank of $v$ is $i$ (it is the variable that we add in this step) then we set it's valuation to '1' and cannot manipulate. If rank of $v$ is larger than $i$ and $v^L$ is not in the connected component of newly add variable then we can set $v^L=0$ (then it's valuation will not start with '1') and our valuation of  $\pi_{(i)}$ will be shorter.

Analogously we can deal with last variable of $\pi_{(i)}$ - $w$. In this case we have to check if $w^R$ is not in the connected component of newly added variable and set $w^R=0$ (so it will not end with '1').

By managing first and last variable during the algorithm execution we can obtain the shortest valuation of $\pi$.

To update the algorithm we only need to change function FirstLast. It solved the 2SAT formula to find out which variable valuations start or end with '1'. Now we will use adjacency graph $AG$.

Let $\pi$ be a pattern and $W$ set of variables, $AG$ be the adjacency graph for $\pi$, $v$ the first letter of $\pi$, and $w$ the last letter. We define $W^R=\{ x^R ; x \in W \}$ and $W^L=\{ x^L ; x \in W \}$. Function $valuate(x,i), x\in AG, i\in \{ 0,1\}$ sets a proper value of all vertices in the connected component of $x$. If element $x$ already has a value different than $i$ then function returns false. In the other case $valuate(x,i)$ sets value of elements in the connected component of x: every element in the same part (of bipartite graph) as $x$ is set to $i$ and every element in the opposite part is set to the opposite value. The time complexity of $valuate(x,i)$ is $O(l)$, where $l$ is the number of elements in the connected component of $x$.
\begin{algorithm}[ht]
Generate adjacency graph $AG$ for $\pi$\;
\smallskip
\ForEach {$x \in W$} {
	 \lIf {{\bf not} $valuate(x^L,1)$} {\Return false\;}
	 \lIf {{\bf not} $valuate(x^R,1)$} {\Return false\;}
}
\smallskip
\lIf {$v^L$ has no value} {$valuate (v^L,0)$\;}
\smallskip
\lIf {$w^R$ has no value} {$valuate (w^R,0)$\;}
\smallskip
\ForEach {$x^L$ in $AG$ that has no value} {
		$valuate(x^L,0$)\;
	}

\Return {true};
\caption{\sl ShortestFirstLast$(\pi,W)$}
\label{alg:cubic-runs}
\end{algorithm}

\begin{theorem}
For a given  ranked pattern  the compressed shortest instance
(if there is any instance) and
the number of instances of the ranked pattern occurring in  $Z_k$
can be constructed in time $O(n*k)$ and (simultaneously) space
$O(n+k^2)$.
\end{theorem}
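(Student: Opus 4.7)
The plan is to run the \emph{Compressed-Embedding} algorithm of the previous section with two modifications: replace each call to $FirstLast$ by the variant \emph{ShortestFirstLast}, and alongside each iteration accumulate a product $\prod_i 2^{c_i}$ counting the valuations. The pattern occupies $O(n)$ space, the compressed variable values take total size $O(k^2)$, and each adjacency graph $AG(\pi_{(i)})$ fits in $O(n)$, so the overall space bound $O(n+k^2)$ will be immediate; the time bound will match $O(n\cdot k)$ as soon as the new primitives are shown to run in linear time per level.

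For the shortest instance I will first argue that, after step $i$, the length of $val(\pi_{(i)})$ depends only on the boundary flags of the leftmost and rightmost variables of $\pi_{(i)}$. If an internal variable $v$ has a leading $i$ deleted, the $i$-interleaved structure of the current factor of $Z_K$ forces the left neighbour of $v$ to end with $i$, so the deleted symbol reappears one position to the left; the symmetric statement holds on the right. Hence only the flags $v^L$ at the first variable and $w^R$ at the last variable can actually shorten $val(\pi_{(i)})$, and the greedy rule is to set both to $0$ whenever they are not already pinned by some vertex of $\V_i$ in $AG(\pi_{(i)})$. This is exactly what \emph{ShortestFirstLast} does, and its bipartite-component scan costs $O(|\pi_{(i)}|)$; summed over $i=1,\ldots,k$ this is $O(n\cdot k)$.

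For the number of instances I will exploit the same component structure. By Lemma~\ref{lemma20}, an extension of an embedding of $\pi_{(i+1)}$ to one of $\pi_{(i)}$ is determined by choosing, for each connected component of $AG(\pi_{(i)})$, which side of its bipartition is labelled $1$. Components containing a vertex forced by $\V_i$ admit at most one such choice; any remaining free component contributes a factor of $2$. Therefore the total number of valuations is $\prod_{i=1}^{k} 2^{c_i}$, where $c_i$ counts the free components at level $i$, and this product is $0$ as soon as some level is infeasible. Each $c_i$ is read off during the same graph traversal already used by \emph{ShortestFirstLast}, so counting adds no asymptotic overhead.

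The part I expect to be delicate is the correctness of the greedy choice for the shortest instance: one must verify that pinning the two boundary flags at level $i$ does not preclude a globally shorter valuation at some later level $j<i$. I plan an exchange argument carried out from $i=K-1$ downwards: given any optimum $val^\star$, overwriting its level-$i$ boundary flags by the greedy values cannot increase $|val(\pi_{(i)})|$ by the paragraph above, and it leaves the remaining levels realisable because the connected component containing the level-$j$ boundary vertex in $AG(\pi_{(j)})$ refines the corresponding component in $AG(\pi_{(i)})$; the forcing constraints emanating from $\V_{>i}$ therefore remain consistent after the swap.
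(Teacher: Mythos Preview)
Your approach matches the paper's exactly: swap $FirstLast$ for \emph{ShortestFirstLast}, count free connected components per level to obtain $2^{\sum_i c_i}$ instances, and note that the adjacency-graph work is $O(n)$ per level, giving $O(n\cdot k)$ time and $O(n+k^2)$ space. The paper's own proof is in fact less careful than yours---it never argues global optimality of the per-level greedy boundary choice, relying on the informal discussion preceding the algorithm---so your exchange argument goes beyond what the paper provides; it can, however, be simplified by observing that $AG(\pi_{(i)})$ depends only on the pattern, so the feasible boundary choices at distinct levels are independent and the final length is separately monotone in each, which makes your ``refines'' step (not well-posed as stated, since $AG(\pi_{(j)})$ and $AG(\pi_{(i)})$ have different vertex sets) unnecessary.
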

\begin{proof}
To construct compressed shortest instance we need to use $ShortestFirstLast$ instead of $FirstLast$ in $Compressed-Embedding$ algorithm. To count the number of instances we need to know how many connected components are not valuated after valuating the $AG$ with newly added variables in each step. Each unvaluated connected component gives us two possibilities. If after last step of algorithm we know that there were $l$ unvaluated components in all steps then there are $2^l$ instances.
 
We will consider the time complexity. We will show that we are able to run $ShortestFirstLast$ in time $O(n)$ (same as the original $FirstLast$).

We generate adjacency graph for the subsequence of $\pi$, it has at most $2*k$ vertices and $n-1$ edges, therefore it takes $O(n)$ time. Function $valuate$ visits or sets value to each element of $AG$ exactly once during the whole algorithm, because during each call it processes one connected component. Total time of all $valuate$ calls is equal to time needed to generate all connected components (which is linear).

Finally time complexity of $ShortestFirstLast$ is $O(n)$.

The $AG$ has size $O(k^2)$ therefore whole $Compressed-Embedding$ algorithm uses $O(n+k^2)$ memory.
\end{proof}

\bibliographystyle{model1-num-names}
\bibliography{glowirels}
\end{document}